\documentclass[aps,pra,twocolumn,amsmath,amssymb,nofootinbib,showpacs,superscriptaddress]{revtex4-2}

\usepackage{graphicx}
\usepackage{dcolumn}
\usepackage{bm}
\usepackage{physics}
\usepackage[english]{babel}
\usepackage{multirow}
\usepackage{hhline}
\usepackage{lineno}
\usepackage{amsthm}

\newtheorem{theorem}{Theorem}
\newtheorem{proposition}{Proposition}

\theoremstyle{remark}
\newtheorem{remark}{Remark}

\begingroup
\footnotetext{* \texttt{ivchenko.ei@phystech.edu}}
\endgroup
\usepackage{hyperref}
\usepackage{color}
\usepackage{soul} 

\usepackage{mathrsfs}

\usepackage[svgnames,dvipsnames]{xcolor}

\hypersetup{hidelinks,colorlinks=true,allcolors=DarkBlue}

\begin{document}

\preprint{APS/123-QED}
\title{Security of quantum key distribution with passive basis choice and detection-efficiency mismatch for a realistic satellite setup}

\author{Egor Ivchenko}

\author{Aleksandr Khmelev}

\affiliation{Russian Quantum Center, Moscow, Russia}
\affiliation{QSpace Technologies, Moscow, Russia}
\affiliation{Moscow Institute of Physics and Technology, Moscow, Russia}
\affiliation{National University of Science and Technology MISIS, Moscow, Russia}

\author{Vladimir Kurochkin}

\affiliation{Russian Quantum Center, Moscow, Russia}
\affiliation{Moscow Institute of Physics and Technology, Moscow, Russia}
\affiliation{National University of Science and Technology MISIS, Moscow, Russia}

\author{Anton Trushechkin}

\affiliation{Technische Universität Braunschweig,  Germany}

             
\begin{abstract}
Detection-efficiency mismatch is a common problem in realistic quantum key distribution (QKD) systems. 
The existing security proofs for the case of the passive basis choice provide a nonzero secret key rate only for a small detection-efficiency mismatch. 
Unfortunately, in realistic setups, the detection-efficiency mismatch can be significant.
Here we present a more precise estimation of the secret key rate for the BB84 protocol with the passive basis choice, which accounts for the detection-efficiency mismatch between four threshold detectors as well as an adaptation of the decoy-state method.
The suggested approach is used to estimate the secret key rate in a QKD experiment between the Micius satellite and the Zvenigorod ground station. 
\end{abstract}

\maketitle

\section{Introduction}
Quantum key distribution (QKD) is a technology that enables legitimate users (Alice and Bob) to establish a common secret key for the exchange of encrypted messages.
The security of QKD protocols is based on the laws of quantum physics, which provide information-theoretic security.
This means that the eavesdropper (Eve) is unable to get any valuable information about the distributed key even if Eve has unlimited computing power and is limited only by laws of physics.
The widely used QKD protocol proposed by Bennett and Brassard in 1984 (BB84)~\cite{Bennett2014} has various security proofs~\cite{Mayers2001, Shor2000, Renner2008, Koashi2009, Tomamichel2017}. 
However, in practice, quantum communication systems exhibit imperfections in their physical implementation that can lead to information leakage.
It is especially relevant for realistic satellite QKD systems due to extreme operating conditions and the impossibility of changes to the satellite part of the setup after its launch~\cite{Gisin2002, Xu2020, Diamanti2016, Reutov2023, Makarov2006, Jain2016}.
Therefore, security proofs of quantum communication protocols must take into account such vulnerabilities.

In discrete-variable QKD systems, single-photon detectors are used to register photon states with some probability, known as the detection efficiency. 
Moreover, this probability is different for different detectors.
This situation is called detection-efficiency mismatch.
It is typical for QKD protocols with trusted single-photon threshold detectors~\cite{Makarov2006, Jain2016, Marcomini2025}. 
At the same time, QKD protocols that enable an eavesdropper to manipulate detectors, such as measurement-device-independent QKD~\cite{Lo2012, Tamaki2012}, are inherently resistant to such attacks, but have practical challenges in implementation. 
We will examine the BB84 protocol with four detectors and passive basis-choice. 
That is, the receiver employs a beam splitter rather than a random number generator to select the measurement basis. 

The security proofs of QKD protocol that address detection-efficiency mismatch for the active basis choice, wherein an active element chooses the measurement basis characteristics, are examined in Refs.~\cite{Bochkov2019, Trushechkin2022, Tupkary2025, Grasselli2025}.
To calculate a secret key rate for the passive basis choice, a general numerical approach can be applied~\cite{Winick2018, Coles2016, Zhang2021, Nahar2026}.
Meanwhile, analytical methods for estimating the secret key in the BB84 protocol with detection-efficiency mismatch can significantly speed up key processing compared to numerical approaches. 
For real-time satellite QKD sessions with a limited communication time, a rapid privacy amplification procedure is especially important~\cite{Liao2017, Li2025}.

Recent paper~\cite{Wang2025} provides an analytical solution for the secret key length in the case of passive basis choice detection. 
However, it gives a nonzero secret key rate only for a small detection-efficiency mismatch. 
Meanwhile, we have a high detection-efficiency mismatch in the QKD experiment between the Micius satellite and the Zvenigorod ground station~\cite{Khmelev2024}. 
Namely, the efficiencies of the detectors can differ by factors more than $2.7$ and there are also different error rates for various transmitted states. 
These effects arise due to the non-absolute identity of the optical and detector efficiencies for different communication channels of the Zvenigorod ground receiver.
Unfortunately, formulas from Ref.~\cite{Wang2025} give zero key rate for these practical parameters, see Appendix~\ref{key_rate_old}.
As a consequence, a theory for more accurate analysis of a large detector-efficiency mismatch is needed. 

In our previous work we describe the satellite-to-ground QKD experiment and present a preliminary formula for the secret key rate, which accounts for detection-efficiency mismatch~\cite{Ivchenko2025, Khmelev2024}. 
In this paper, we provide a rigorous theoretical framework for this.
Here, we assume that Eve knows the exact detection efficiency values, but she is unable to manipulate them.
Also, the study is conducted under the single-photon Bob assumption, which suggests that Eve doesn't add photons in pulses. In other words, if Alice emits a single photon state, Bob also receives at most one photon.
Moreover, we adopt the decoy state method for our case to deal with multiphoton pulses on Alice's side. 
Using the proposed method, we analyze the satellite-to-ground quantum communication channel with a detection-efficiency mismatch to estimate the final key in the QKD experiment between the Micius satellite and the Zvenigorod ground station~\cite{Khmelev2024}.

The paper is organized as follows. 
In Sec.~\ref{Preliminaries}, we describe the QKD protocol, introduce a mathematical detection model and derive the form of the density matrix for distributed photons. 
In Sec.~\ref{Problem statement}, the optimization problem for calculating the secret key rate is defined. 
In Sec.~\ref{Calculation key generation rate}, we provide an analytical solution to the optimization problem. 
In Sec.~\ref{Decoy_state}, we adopt the decoy state method for the case of detection-efficiency mismatch with the passive basis choice. 
In Sec.~\ref{Channel modeling}, we present the practical data and a channel model from a satellite-to-ground QKD experiment to validate the proposed theory. 
In Sec.~\ref{Key rates}, the results of simulation and comparison of different approaches for estimating the secret key rate are provided.
We summarize and present concluding remarks in Sec.~\ref{Conclusion}.

\section{Preliminaries} \label{Preliminaries}
We begin with a brief description of the BB84 QKD protocol with the passive basis choice and the detection-efficiency mismatch. 

Up to Section~\ref{Decoy_state}, we assume that the sender, Alice, prepares and transmits single-photon states in a two-dimensional Hilbert space $\mathcal{H}_A = \mathbb{C}^2$. 
The states can be prepared in the Z basis ($ \ket{0}$ and $\ket{1}$) or in the X basis ($\ket{+}$ and $\ket{-}$), where $\ket{\pm} = \left(\ket{0} \pm \ket{1}\right) / \sqrt{2}$. 
The first state in each basis corresponds to the encoded bit 0 and the second one to bit 1.  
The receiver, Bob, detects pulses with one photon.
Here, we consider a scenario in which Eve is unable to add photons into pulses, so that, if Alice's output is single-photon, Bob's input is at most single-photon too.
Then, Bob's Hilbert space is spanned by the single-photon and vacuum vectors $\ket{0}, \ket{1}$ and $\ket{\rm vac}$, i.e., $\mathcal{H}_B=\mathbb{C}^3$.

We will use an equivalent entanglement-based formulation of the BB84 protocol.
According to this, the initial entangled state takes the following form:
\begin{equation} 
\rho_{AB} = \ket{\Phi}\bra{\Phi},
\end{equation}
where
\begin{equation} \label{phi}
\ket{\Phi} = \frac{1}{\sqrt{2}}\left( \ket{0}_A\ket{0}_B + \ket{1}_A\ket{1}_B\right).
\end{equation}

Bob's equipment contains a beamsplitter that provides a measurement in the Z basis with probability $p_z$ and in the X basis with probability $p_x = 1 - p_z$. As a result, we add two quantum coins $ \widetilde{A}$ and $ \widetilde{B}$ to the initial state, which are responsible for the choice of the measurement bases for Alice and Bob, respectively:
\begin{equation} \label{rhoAB}
\rho_{AB\widetilde{A}\widetilde{B}} =  \rho_{AB}\otimes P\left[ \sqrt{p_z}\ket{z}+\sqrt{p_x}\ket{x}\right]^{\otimes 2}
\end{equation}
where we have introduced the notation
\begin{equation}
P\left[ \varphi\right]=\ket{\varphi}\bra{\varphi}.
\end{equation}

We consider QKD systems with the passive basis choice and four single-photon detectors on the receiver side with the efficiencies $\eta_{a, \alpha}, a \in \{ z, x\}, \alpha \in \{0,1\}$. 
According to article~\cite{Zhang2017_DEM}, the losses from the receiver optical scheme can be incorporated into the detection efficiencies of the detectors.
Then Bob's measurement is described by the positive operator-valued measure (POVM) with the operators
\begin{equation} \label{BobPOVM}
\begin{split}
P_{z,0}^B &= \eta_{z,0} \ket{0}_{B}\bra{0} \otimes \ket{z}_{\widetilde{B}}\bra{z},  \\  P_{z,1}^B &= \eta_{z,1} \ket{1}_{B}\bra{1} \otimes \ket{z}_{\widetilde{B}}\bra{z}, \\
P_{x,0}^B &= \eta_{x,0} \ket{+}_{B}\bra{+} \otimes \ket{x}_{\widetilde{B}}\bra{x}, \\  P_{x,1}^B &= \eta_{x,1} \ket{-}_{B}\bra{-} \otimes \ket{x}_{\widetilde{B}}\bra{x}, \\
P_{\varnothing}^B &= I - P_{z,0}^B - P_{z,1}^B - P_{x,0}^B - P_{x,1}^B.
\end{split}
\end{equation}

Alice's POVM has the same form, but with perfect efficiencies since it described not a physical measurement, but a virtual one:
\begin{equation} \label{AlicePOVM}
\begin{split}
P_{z,0}^A &=  \ket{0}_{A}\bra{0} \otimes \ket{z}_{\widetilde{A}}\bra{z}, \\  
P_{z,1}^A &=  \ket{1}_{A}\bra{1} \otimes \ket{z}_{\widetilde{A}}\bra{z}, \\
P_{x,0}^A &=  \ket{+}_{A}\bra{+} \otimes \ket{x}_{\widetilde{A}}\bra{x}, \\
P_{x,1}^A &=  \ket{-}_{A}\bra{-} \otimes \ket{x}_{\widetilde{A}}\bra{x}.
\end{split}
\end{equation}

After the measurement, Alice and Bob discard "no-detection" events and all positions whose measurement bases for Alice and Bob do not match. 
The effect of the measurement, i.e., the recording of its outcomes, and sifting is described by the map \cite{Bochkov2019, Trushechkin2022, Winick2018, Coles2016}:
\begin{equation} \label{rho2}
\begin{split}
\rho_{AB\widetilde{A}\widetilde{B}} &\rightarrow \sum\limits_{a \in \left\{z, x\right\}} M^A_a \otimes M^B_{a} \rho_{AB\widetilde{A} \widetilde{B}} \left(M^A_a \otimes M^B_{a} \right)^{\dagger}
\\
&=\rho_{A B \widetilde{A} \widetilde{B}}^{\rm sifted} =
\mathcal{M}(\rho_{  A B \widetilde{A}\widetilde{B}}),
\end{split}
\end{equation}
where
\begin{gather} 
M^A_z = \sum\limits_{\alpha \in \left\{0, 1\right\}} P_{z, \alpha}^A, \qquad
M^A_x = \sum\limits_{\alpha \in \left\{0, 1\right\}} \textrm{H}_A P_{x, \alpha}^A,
\label{MA}\\
M^B_z = \sum\limits_{\alpha \in \left\{0, 1\right\}} \sqrt{P_{z, \alpha}^B}, \qquad
M^B_x = \sum\limits_{\alpha \in \left\{0, 1\right\}} \textrm{H}_B \sqrt{P_{x, \alpha}^B},
\label{MB}
\end{gather}
and $\textrm{H}_{a}$ is the Hadamard transformation that acts on the subspace $a$. 
We use the Hadamard operator to represent the measurement results in the set of the bit values $\{0, 1\}$ for both bases.

Note that we can merge the registers with the measurement bases $\widetilde{A}$ and $\widetilde{B}$ into $\widetilde{\textbf{B}}$, because, after the sifting, these registers contain the same values and are effectively two-dimensional:
\begin{equation} 
\ket{z}_{\widetilde{\textbf{B}}}=\ket{zz}_{\widetilde{A} \widetilde{B}}, \qquad
\ket{x}_{\widetilde{\textbf{B}}}=\ket{xx}_{\widetilde{A} \widetilde{B}}.
\end{equation}

Denote the normalized state as
\begin{equation} \label{rho3_1}
\widetilde\rho_{  A B \widetilde{\textbf{B}}}^{\rm sifted} 
\equiv\widetilde\rho^{\rm sifted}
= \frac{\rho_{ A B \widetilde{\textbf{B}}}^{\rm sifted}}{ p_{\rm pass}},
\end{equation}
where
\begin{equation} \label{pdet}
p_{\rm pass} = \Tr \rho_{A B \widetilde{\textbf{B}}}^{\rm sifted}
\end{equation}
is the probability of passing, i.e., when basis choice of Alice and Bob coincides and Bob gets a detection.

We analyze the asymptotic case of a large number of the transmitted states $N$, i.e., $N \rightarrow \infty$. 
The key generation rate is defined as the ratio of the length of the final key to the total number of pulses $N$.

\section{Secret key rate} \label{Problem statement}
In this section, we present the estimation of the secret key rate as an optimization problem.

According to the Devetak-Winter theorem~\cite{Devetak2005}, the secret key rate for the asymptotic case is
\begin{equation} \label{K}
K = p_{\rm pass} \left[H(A|E\widetilde{\textbf{B}})_{\widetilde\rho^{\rm sifted}} - f H(A|B\widetilde{\textbf{B}})_{\widetilde\rho^{\rm sifted}} \right],
\end{equation}
where $H(X|Y)$ is the conditional von-Neumann entropy for the subsystem $X$ conditioned on the subsystem $Y$, $f \geq 1$ is the error correction inefficiency, 
and $E$ is the eavesdropper's subsystem of the purification $\widetilde\rho^{\rm sifted}_{AB\widetilde{A}\widetilde{B}E}$ of the state $\widetilde\rho^{\rm sifted}_{AB\widetilde{A}\widetilde{B}}$ defined in Eqs.~(\ref{rho2})--(\ref{pdet}). 

The first term in the square brackets of Eq.~(\ref{K}) defines Eve's ignorance of Alice's bit sequence. 
It is based on Eve’s knowledge of her subsystem and the announced bases. 
Using the entropic uncertainty relation, we can eliminate Eve's subsystem~\cite{Berta2010, Coles2011} (note that the register $\widetilde{\textbf{B}}$ is classical):
\begin{equation} \label{ineq}
H\left[ Z_{A}|E \widetilde{\textbf{B}} \right]_{\widetilde\rho_{Z_{A} B \widetilde{\textbf{B}}}^{\rm sifted}} + H\left[ X_{A}|B \widetilde{\textbf{B}} \right]_{\widetilde\rho_{X_{A} B \widetilde{\textbf{B}}}^{\rm sifted}} \geq 1,
\end{equation}
where
\begin{equation} \label{rhoZX}
\begin{split}
\widetilde\rho_{X_{A} B \widetilde{\textbf{B}}}^{\rm sifted} &\equiv
\mathcal{X} \widetilde\rho^{\rm sifted}\\
&= \sum\limits_{\alpha \in \left\{0, 1\right\}} \textrm{H} \ket{\alpha}_{A}\!\bra{\alpha} \textrm{H} \widetilde\rho^{\rm sifted}\textrm{H} \ket{\alpha}_{A}\!\bra{\alpha} \textrm{H}, \\
\widetilde\rho_{Z_{A} B \widetilde{\textbf{B}}}^{\rm sifted} &\equiv 
\mathcal{Z} \widetilde\rho^{\rm sifted}\\&= \sum\limits_{\alpha \in \left\{0, 1\right\}} \ket{\alpha}_{A}\!\bra{\alpha} \widetilde\rho^{\rm sifted} \ket{\alpha}_{A}\!\bra{\alpha} ,
\end{split}
\end{equation}
$\mathcal{Z}$ is the transformation of decoherence in the $z$ basis, and $\mathcal{X}$ is the transformation of decoherence in the $x$ basis followed by the Hadamard transformation $\textrm{H}$. 

According to Eq.~(\ref{ineq}), the Eve's ignorance of Alice's bit sequence can be estimated from below as follows:
\begin{multline} \label{ineq_sup}
H(A|E\widetilde{\textbf{B}})_{\widetilde\rho^{\rm sifted}}=H\left[ Z_{A}|E \widetilde{\textbf{B}} \right]_{\widetilde\rho_{Z_{A} B \widetilde{\textbf{B}}}^{\rm sifted}} \\
\geq 1 - \sup\limits_{\rho_{AB\widetilde{A}\widetilde{B}} \in \textbf{S}} H(X_{A}| B\widetilde{\textbf{B}})_{\widetilde\rho_{X_{A} B \widetilde{\textbf{B}}}^{\rm sifted}}.
\end{multline}
Since Eve can control the channel and the initial state, the supremum is taken over all initial density matrices subject to constraints on the observables:
\begin{eqnarray} \label{S}
\textbf{S} = &&\{ \rho_{AB\widetilde{A}\widetilde{B}} \in \mathcal{L}\left(  \mathcal{H}_A \otimes \mathcal{H}_B \otimes \mathcal{H}_{\widetilde{A}} \otimes \mathcal{H}_{\widetilde{B}} \right)  \ | \nonumber\\ &&\rho \geq 0, \Tr \Gamma_i \rho_{AB\widetilde{A}\widetilde{B}} = \gamma_i, i = 1, \ldots, m \},
\end{eqnarray}
where $\mathcal{L}$ denotes the set of linear operators on the given Hilbert space,
$\Gamma_i$ are linear operators that correspond to the observables of Alice and Bob with the observed expectations values $\gamma_i$. 

The possible observable operators $\Gamma_i$ consist of the realistic observables of Alice and Bob. We will use the following set of $\Gamma_i$:
\begin{equation} \label{G}
\begin{split}
 \Gamma_1 &=  \frac{1}{p_z^2}P_{z, 0}^A  \otimes P_{z, 0}^B, \qquad
 \Gamma_2 =  \frac{1}{p_z^2}P_{z, 1}^A  \otimes P_{z, 1}^B, \\
 \Gamma_3 &=  \frac{1}{p_x^2}P_{x, 0}^A  \otimes P_{x, 0}^B, \qquad
 \Gamma_4 =  \frac{1}{p_x^2}P_{x, 1}^A  \otimes P_{x, 1}^B,\\
 \Gamma_5 &=  \frac{1}{p_z^2}P_{z,1}^A  \otimes P_{z,0}^B, \qquad \Gamma_6 = \frac{1}{p_z^2}P^A_{z,0} \otimes P^B_{z,1}, \\
\Gamma_7 &=  \frac{1}{p_x^2}P^A_{x,1} \otimes P^B_{x,0}, \qquad \Gamma_8 = \frac{1}{p_x^2}P_{x,0}^A  \otimes P_{x,1}^B.
\end{split}
\end{equation}

The values corresponding to the described constraints are introduced in the following form:
\begin{equation} \label{tG}
\begin{split}
 \Tr\Gamma_1 \rho_{AB\widetilde{A}\widetilde{B}} &= p_{z, 00}, \qquad
 \Tr\Gamma_2 \rho_{AB\widetilde{A}\widetilde{B}} = p_{z, 11}, \\
 \Tr\Gamma_3 \rho_{AB\widetilde{A}\widetilde{B}} &= p_{x, 00}, \qquad
 \Tr\Gamma_4 \rho_{AB\widetilde{A}\widetilde{B}} = p_{x, 11},
\\ 
\Tr\Gamma_5 \rho_{AB\widetilde{A}\widetilde{B}} &= p_{z,10}, 
 \qquad 
 \Tr\Gamma_6 \rho_{AB\widetilde{A}\widetilde{B}} = p_{z,01},
 \\ 
 \Tr\Gamma_7 \rho_{AB\widetilde{A}\widetilde{B}} &= p_{x,10},
 \qquad 
 \Tr\Gamma_8 \rho_{AB\widetilde{A}\widetilde{B}} = p_{x,01}.
\end{split}
\end{equation}

The second term in the square brackets of Eq.~(\ref{K}) defines Bob's ignorance of Alice's bit sequence. 
It is based on Bob's measurement results and open information about the key. 
This term can be estimated from above as $fh(Q)$, where $h(x)=-x\log_2x-(1-x)\log_2(1-x)$ is the binary Shannon entropy and
\begin{equation} \label{Q}
\begin{split}
Q = \frac{1}{p_{\rm pass}}\Tr \big[ \rho_{AB\widetilde{A}\widetilde{B}} \big( &P_{z,0}^A \otimes P_{z,1}^B + P_{z,1}^A \otimes P_{z,0}^B   \\ + &P_{x,0}^A \otimes P_{x,1}^B + P_{x,1}^A \otimes P_{x,0}^B \big) \big]
\end{split}
\end{equation} 
is the quantum bit error rate (QBER).

Thus, 
\begin{equation} \label{Knew}
K \geq p_{\rm pass} \left[ 1 - \sup\limits_{\rho_{AB\widetilde{A}\widetilde{B}} \in \textbf{S}} H(X_{A}| B\widetilde{\textbf{B}})_{\widetilde\rho_{X_{A} B \widetilde{\textbf{B}}}^{\rm sifted}} - f h(Q) \right].
\end{equation}

The conditional entropy from Eq.~(\ref{Knew}) can be written as follows:
\begin{equation} \label{HD}
H(X_{A}|B\widetilde{\textbf{B}})_{\widetilde\rho_{X_{A} B \widetilde{\textbf{B}}}^{\rm sifted}} = - D\left( \mathcal{X} \widetilde\rho^{\rm sifted} || I \otimes \widetilde{\rho}_{B\widetilde{\textbf{B}}}^{\rm sifted} \right),
\end{equation}
where $\widetilde{\rho}_{B\widetilde{\textbf{B}}}^{\rm sifted} = \Tr_{A} \mathcal{X} \widetilde\rho^{\rm sifted}$ and 
$$D(\sigma||\tau) = \Tr\sigma \log_2 \sigma - \Tr \sigma \log_2 \tau$$ is the quantum relative entropy, which is known to be jointly convex with respect to its arguments.

Hence, calculation of the secret key rate is reduced to solving the optimization problem in Eq.~(\ref{Knew}) over the space of all possible initial density matrices $\textbf{S}$ with the constraints $\Gamma_i$, $i\in \{1, \dots,8\}$.

\section{Estimations of the key rate} \label{Calculation key generation rate} 
In this section, we provide an analytical solution to the optimization problem in Eq.~(\ref{Knew}) subject to the constraints given by Eqs.~(\ref{S}), (\ref{G}) and (\ref{tG}).

\begin{theorem}
 The secret key rate [Eq.~(\ref{Knew})] subject to the constraints [Eqs.~(\ref{G})] is lower bounded by
\begin{multline} \label{KnewnewH}
K \geq  \sum\limits_{a \in \{ x, z \}} p_a^2 p_{{\rm pass},a}\Bigg[h\left(\frac{1-\delta_{a,a}}{2}\right)  \\  - h\left(\frac{1-\sqrt{\delta_{a,a}^2+\delta_{a,\overline{a}}^2}}{2}\right)\Bigg]- p_{{\rm pass}}f h(Q),
\end{multline}
\begin{gather} 
\delta_{a,a} = \frac{p_{a,0} - p_{a,1}}{p_{{\rm pass}, a}}, \qquad \delta_{a,\overline{a}} = \frac{\sqrt{\eta_{a,0} \eta_{a,1}} \left(\widetilde{p}_{\overline{a},=} - \widetilde{p}_{\overline{a},\neq}\right)}{p_{{\rm pass}, a}},
\label{tau}
\\
\widetilde{p}_{\overline{a},=} =\sum\limits_{i=0}^1 \frac{p_{\overline{a}, ii}}{\eta_{\overline{a}, i}}, \qquad \widetilde{p}_{\overline{a},\neq}= \sum\limits_{i=0}^1\frac{p_{\overline{a}, (1-i)i}}{\eta_{\overline{a}, i}}.
\label{lambda}
\end{gather}
where $\overline{a}$ is an opposite basis to $a$ (i.e., $\overline{z}=x$ and  $\overline{x}=z$), and 
\begin{equation} \label{def}
\begin{split}
&p_{a,\alpha}=p_{a,\alpha \alpha} +  p_{a,(1-\alpha)\alpha}, \\ &p_{{\rm pass}, a} = p_{a, 0} + p_{a, 1}, \\
&p_{{\rm pass}}=\sum\limits_{a\in\{x,z\}} p_a^2 p_{{\rm pass}, a}.
\end{split}
\end{equation}
\end{theorem}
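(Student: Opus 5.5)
The starting point is Eq.~(\ref{Knew}), and the task is to bound $\sup_{\rho\in\textbf{S}} H(X_A|B\widetilde{\textbf{B}})$ from above. Since $\widetilde{\textbf{B}}$ is classical after sifting, I will split the conditional entropy according to the value of $\widetilde{\textbf{B}}$:
\[
H(X_A|B\widetilde{\textbf{B}})=\sum_{a} \frac{p_a^2 p_{{\rm pass},a}}{p_{\rm pass}}\,H(X_A|B)_{\sigma_a},
\]
where $\sigma_a$ is the normalized sifted state in the branch where both parties chose basis $a$. Multiplying by $p_{\rm pass}$ then produces the weights $p_a^2p_{{\rm pass},a}$ that appear in the statement. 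In each branch, $A$ and $B$ are effectively qubits: Bob's detection projects onto $\mathrm{span}\{\ket0,\ket1\}$, and we work with the unnormalized state $\sqrt{P^B}\rho\sqrt{P^B}$. Write $D_a=\mathrm{diag}(\sqrt{\eta_{a,0}},\sqrt{\eta_{a,1}})$ in Bob's $a$-basis. The branch state is then the two-qubit state $(I\otimes D_a)\rho_{AB}(I\otimes D_a)/p_{{\rm pass},a}$ expressed in the $a$-basis, and $X_A$ means measuring Alice in the basis conjugate to $a$.

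The key-basis term is handled by writing
\[
H(X_A|B)_{\sigma_a}=H(X_AB)-H(B),
\]
where $H(B)=h\bigl((1-\delta_{a,a})/2\bigr)$ exactly. This holds because Bob's marginal in basis $a$ after dephasing has diagonal $p_{a,0}/p_{{\rm pass},a}$ and $p_{a,1}/p_{{\rm pass},a}$. The joint entropy $H(X_AB)$ after dephasing Alice is at most $1+H(B)$, which is too crude, so I will use a sharper bound. Conditioned on Alice's conjugate outcome, Bob's state is a qubit whose Bloch vector has $a$-component and off-diagonal component. Concavity, or equivalently a symmetrization argument, gives
\[
H(X_AB)\le 1+h\Bigl(\tfrac{1-r}{2}\Bigr),
\]
where $r$ is the length of the averaged-sign Bloch vector. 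This yields $H(X_A|B)\le 1-h\bigl((1-\delta_{a,a})/2\bigr)+h\bigl((1-r)/2\bigr)$. After summing over the branches, the constant $1$ cancels against the $1$ in Eq.~(\ref{Knew}), because the weights sum to $p_{\rm pass}$. That produces exactly Eq.~(\ref{KnewnewH}), provided we show that $r\ge\sqrt{\delta_{a,a}^2+\delta_{a,\overline a}^2}$, so that the monotonically decreasing $h((1-r)/2)$ is bounded above.

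The main obstacle is lower-bounding the coherence component $\delta_{a,\overline a}$ of $r$ from the data of the opposite basis. The relevant quantity is the correlation $\langle \sigma^{\overline a}_A\otimes\sigma^{\overline a}_B\rangle$ in the branch-$a$ state. Because of the filter $D_a$, it equals $\sqrt{\eta_{a,0}\eta_{a,1}}$ times the corresponding off-diagonal correlation of the unfiltered $\rho_{AB}$, normalized by $p_{{\rm pass},a}$. I would then express the unfiltered correlation $\Tr\rho_{AB}(\sigma^{\overline a}\otimes\sigma^{\overline a})$ using the $\overline a$-basis statistics. Dividing each observed $p_{\overline a,ij}$ by $\eta_{\overline a,j}$ recovers the unfiltered joint probabilities $\Tr\rho_{AB}(P_i\otimes P_j)$ in basis $\overline a$. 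The correlator is then $\widetilde p_{\overline a,=}-\widetilde p_{\overline a,\neq}$, which is exactly $\delta_{a,\overline a}$ in Eq.~(\ref{tau}). The point to check is that this step is legitimate because Eve cannot add photons: the single-photon assumption confines $\rho_{AB}$ to the qubit subspace plus vacuum, and vacuum does not contribute to these terms. Finally, since Bob's conditional Bloch vector has $a$-component with mean $\delta_{a,a}$ and $\overline a$-component correlated with Alice's sign with mean $\delta_{a,\overline a}$, a convexity or Jensen step for $h((1-|\cdot|)/2)$ combined with the triangle inequality gives $r\ge\sqrt{\delta_{a,a}^2+\delta_{a,\overline a}^2}$. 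I expect this convexity step, done carefully with the sign choice, to be the delicate point.
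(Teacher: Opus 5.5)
Your proof has a genuine gap at the step $H(B)=h\bigl((1-\delta_{a,a})/2\bigr)$ ``exactly.''

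\textbf{The gap.} The constraints fix only the \emph{diagonal} of Bob's branch marginal $D_a\rho_B D_a/p_{{\rm pass},a}$. Its off-diagonal entry is proportional to $\sqrt{\eta_{a,0}\eta_{a,1}}\,\bigl(\langle\sigma^{\overline a}_B\rangle_\rho\mp i\langle\sigma^{y}_B\rangle_\rho\bigr)$.
\begin{itemize}
\item The real part is fixed by the data at $p_{\overline a,0}/\eta_{\overline a,0}-p_{\overline a,1}/\eta_{\overline a,1}$, which is generally nonzero.
\item The imaginary part is completely unconstrained. For example, $(1-\epsilon)\ket{\Phi}\bra{\Phi}+\epsilon\,\frac{I}{2}\otimes\ket{+i}\bra{+i}$ gives the same eight observables as $(1-\epsilon)\ket{\Phi}\bra{\Phi}+\epsilon\,\frac{I}{4}$, yet Bob's marginal has $\langle\sigma^y_B\rangle=\epsilon$.
\end{itemize}
In general you only have $H(B)\le h\bigl((1-\delta_{a,a})/2\bigr)$, which points the wrong way: to upper-bound $H(X_A|B)=H(X_AB)-H(B)$ you need a \emph{lower} bound on $H(B)$. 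Reading ``after dephasing'' as dephasing Bob does not help, because that also erases the coherence that carries $\delta_{a,\overline a}$. Your concavity step controls only $H(X_AB)$, so bounding the two terms separately cannot close the argument. The delicate point is this $H(B)$ term, not the Jensen step you flagged.

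\textbf{The fix.} Symmetrize the whole branch state \emph{before} splitting $H(X_A|B)$ into $H(X_AB)-H(B)$. This is exactly the role of $\Phi_2$ in Proposition~\ref{Proposition1}. Replace $\sigma_a$ by $\bar\sigma_a=\frac12\bigl[\sigma_a+(Z\otimes Z)\sigma_a(Z\otimes Z)\bigr]$, where $Z$ is the Pauli operator diagonal in basis $a$.
\begin{itemize}
\item The twirl leaves the $p_{a,ij}$, $p_{{\rm pass},a}$ and the correlator $\langle\sigma^{\overline a}\otimes\sigma^{\overline a}\rangle_{\sigma_a}=\delta_{a,\overline a}$ unchanged. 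This is because $Z$ commutes with $D_a$ and the basis-$a$ projectors, and $Z\otimes Z$ commutes with $\sigma^{\overline a}\otimes\sigma^{\overline a}$.
\item It leaves $H(X_A|B)$ invariant, since $Z_A$ only swaps Alice's conjugate outcomes and $Z_B$ is a local unitary on the conditioning system.
\item Concavity of conditional entropy then gives $H(X_A|B)_{\sigma_a}\le H(X_A|B)_{\bar\sigma_a}$.
\end{itemize}
In $\bar\sigma_a$, Bob's marginal is diagonal, so your formula for $H(B)$ now holds, and Alice's conjugate outcome is uniform. The two conditional Bloch vectors have $a$-component $\delta_{a,a}$, $\overline a$-component $\pm\delta_{a,\overline a}$, and third component $\pm u$ with $u$ free. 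The joint-entropy bound therefore follows without any Jensen step, because $u$ only lengthens the vector.

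With this step inserted, your argument is essentially the paper's. The paper imposes $\Phi_1$ and $\Phi_2$ to reduce to an X-shaped matrix, whose free parameter $w$ cancels after Alice's dephasing, and then evaluates the same $H(X_AB)-H(B)$ exactly. You do not need $\Phi_1$, since the imaginary coherence only increases $r$.
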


Note that the quantities $\delta_{a,b}$ have the following physical meaning~\cite{Trushechkin2022}. 
A measurement with imperfect efficiency can be decomposed into an attenuation dependent on the detector efficiencies followed by an ideal measurement. 
Then $\delta_{a,b}$ is the normalized difference of the probabilities of Bob's outcomes in the virtual ideal measurement in the basis $b$ after the attenuation due to the imperfect measurement in the basis $a$. 
If $a=b$, this quantities corresponds to the real observable difference, according to Eq.~(\ref{tau}). 

The key rate estimate procedure consists of two steps. 
Initially, we determine the appropriate form of the density matrix $\rho_{AB \widetilde{A} \widetilde{B}}$, which is defined by restrictions imposed by the observable data [see Eqs.~(\ref{G})] and additional symmetries. 
These symmetries can be identified using Proposition~\ref{Proposition1} in Appendix~\ref{denmat}. 
Then we find the matrix eigenvalues and calculate the entropies in Appendix~\ref{appendix_key_rate_general}.

Eq.~(\ref{KnewnewH}) uses detailed statistics of detector clicks for various bases and bit values.
We also derive two simpler formulas, which may provide lower key rates, but use more aggregated statistics. 
The first one is
\begin{equation} \label{K_new_case}
\begin{split}
K &\geq 
\sum\limits_{a \in \{ x, z \}} \!\!p_a^2 p_{{\rm pass},a} \left( 1 -h\!\left(\frac{1-\Delta_a}{2} \right) \right)
\\ &- \sum\limits_{a \in \{ x, z \}} {p_a^2} p_{{\rm pass},a} f h(Q),
\end{split}
\end{equation}
\begin{equation} 
\Delta_a =
\frac{\sqrt{\eta_{a,0} \eta_{a,1}}\left( \widetilde{p}_{\overline{a},=} - \widetilde{p}_{\overline{a},\neq}\right)}{p_{{\rm pass},a}}.
\end{equation}
A derivation of the key rate evaluation for this case is given in Appendix~\ref{denmat_1}. 

Formula (\ref{K_new_case}) still uses separate statistics for each basis.
The second simplified formula uses aggregated statistics for the two bases together, and is expressed as
\begin{equation} \label{K_new_case_2}
K \! \geq p_{\rm pass} \left( 1 - h\!\left(\frac{1-\Delta''}{2} \right) \right) -p_{\rm pass}f h(Q),
\end{equation}
\begin{equation} 
\Delta'' = \frac{1}{p_{\rm pass}}\sum\limits_{a \in\{x,z\}} p_a^2 \sqrt{\eta_{a,0} \eta_{a,1}}\left( \widetilde{p}_{\overline{a},=} - \widetilde{p}_{\overline{a},\neq}\right).
\end{equation}
A derivation of the key rate evaluation for this case is given in Appendix~\ref{denmat_2}. 
A numerical comparison of formulas (\ref{KnewnewH}), (\ref{K_new_case}) and (\ref{K_new_case_2}) is presented in Section~\ref{Key rates}.

\begin{remark} \label{remark_1}
The secret key rate is monotonic with respect to the detection-efficiency mismatch parameters $\eta_{a, \alpha},\ a\in\{x,z\}, \ \alpha \in\{0,1\}$. The monotonicity of the key rate [see Eqs.~(\ref{KnewnewH}) and (\ref{K_new_case})] over the detection-efficiency mismatch parameters is proved in Appendix~\ref{mono}. 
This fact allows us to evaluate the secret key using only bounds for detection efficiencies $\eta_{a, \alpha}, \alpha\in\{0, 1\}, a \in \{x, z\}$, rather then the precise values. 
It is important because, in practice, we cannot measure the exact values of the efficiencies, but we can easily estimate them from below.
\end{remark}

\section{Decoy state method} \label{Decoy_state}
The previous reasoning is made for strictly single-photon states, but in practice, sources of weak coherent states are often used. 
This allows for a significant increase in the initial pulse generation rate and, consequently, the key rate. 
However, the BB84 protocol with weak coherent pulses becomes vulnerable to a photon number split attack~\cite{Dusek2000, Lutkenhaus_2002}, and the method of decoy states is used to preserve the protocol secrecy~\cite{Wang2005, Lo2005, Ma2005, Zhang2017_decoy}. 
Here, we will consider a method with two decoy states, using the following notation: the intensity of the signal states is $\mu_{s}=\mu$, the intensities of the decoy states are $\mu_{d}=\nu_1$, and $\mu_{v}=\nu_2$. They must satisfy the conditions $0 \leq \nu_2 < \nu_1$ and $\nu_2 + \nu_1 < \mu$.

In the decoy state method, the observables are the registered click rates $^\nu\!p_{a, \alpha}$ and bit error rates $^\nu\!p_{a, (1-\alpha)\alpha}={^\nu\!q_{a, \alpha}}$ for each basis $a \in \{ z, x\}$, bit $\alpha \in \{0,1\}$, and state type $\nu \in \{ s, d, v\}$. 
Using these values, we can estimate the parameters of the true single-photon signal states $_1^s p_{a, \alpha}$ and $_1^s p_{a, (1-\alpha)\alpha}$, $a \in \{ z, x\}$, $\alpha \in \{0,1\}$, where $_1^s p_{a, \alpha}$ is the detection probability of a single-photon signal state for the basis $a$ and bit $\alpha$, and $_1^s p_{a, (1-\alpha)\alpha}$ is the corresponding probability of bit error in a single-photon signal state.
Also, we can express the probability of receiving the correct bit $\alpha$ for a single-photon signal state and the basis $a$ as $_1^s p_{a, \alpha\alpha} = {_1^sp_{a, \alpha}} -  {_1^sp_{a, (1-\alpha)\alpha}}$.
To calculate the key rate with the decoy state method, it is necessary to replace $p_{a, \alpha\alpha}$ and $p_{a, (1-\alpha)\alpha}$ in Eqs.~(\ref{KnewnewH}), (\ref{K_new_case}), and (\ref{K_new_case_2}) with $_1^sp_{a, \alpha\alpha}$ and $_1^s p_{a, (1-\alpha)\alpha}$, respectively. 
Note that the leading superscript denotes the type of state, and the leading subscript denotes the number of photons.

This section is based on the articles~\cite{Bochkov2019, Trushechkin2022}, in which the authors examine the decoy state method, specifically focusing on active basis choice with detection efficiency mismatch. 
Also, we use the works~\cite{Lo2005, Ma2005, Zhang2017_decoy}, which describe a generalized decoy states method and derive estimates for single-photon pulses and errors in them. 
The papers~\cite{Bochkov2019, Trushechkin2022} prove that the approach provided in the article~\cite{Ma2005} does not make assumptions about the detector unbalance. 
So, the application of the decoy states method for detection-efficiency mismatch consists of separately evaluating the data by two bases and by different outcomes for Bob. 
That gives us the opportunity to generalize existing theory to the case of passive basis choice without additional assumptions.

Note that the right-hand side of Eq.~(\ref{KnewnewH}) is a function of the variables $_1^s p_{a, \alpha}$, $_1^s p_{a, (1-\alpha)\alpha}$, $a \in \{ z, x\}$, $\alpha \in \{0,1\}$, and we want to estimate it from below. 
The rate of single-photon signal clicks $_1^s p_{a, \alpha}$ lies in the interval between the estimated lower bound ($_1^s p_{a, \alpha}^L$) and the upper bound, which we take as all registered signal clicks ($^s p_{a, \alpha}$).
The error rates in single-photon clicks $_1^s p_{a, (1-\alpha)\alpha}$ lie between the lower bound, which we take as errors in the registered signal clicks ($^s p_{a, (1-\alpha)\alpha}$), and the estimated upper bound ($_1^s p_{a, (1-\alpha)\alpha}^U$).
Thus, we minimize the key rate Eq.~(\ref{KnewnewH}) over the parameters of single-photon states in the presented intervals.

As is argued in Refs.~\cite{Bochkov2019,Trushechkin2022}, well-known decoy formulas from Ref.~\cite{Bochkov2019} do not use any assumptions on the detection efficiencies. 
So, we can directly apply them (separately for each basis and bit value).
Then the detection probabilities of single-photon states are estimated as
\begin{multline} \label{q1}
{^s_1 p_{a, \alpha}} \geq {^s_1p_{a, \alpha}^L} = \frac{\mu^2 e^{-\mu}}{\mu \nu_1 - \mu \nu_2 - \nu_1^2 + \nu_2^2}  \\ \times \left( {^d p_{a, \alpha}} e^{\nu_1} - {^v\!p_{a, \alpha}} e^{\nu_2} - \frac{\nu_1^2 - \nu_2^2}{\mu^2} \left( {^s\!p_{a, \alpha}} e^{\mu} - Y_{0,a,\alpha}^L\right)\right),
\end{multline}
where
\begin{equation} \label{y0}
Y_{0,a,\alpha}^L = \textrm{max} \left( \frac{{^v\!p_{a, \alpha}} \nu_1 e^{\nu_2} - {^dp_{a, \alpha}} \nu_2 e^{\nu_1}}{\nu_1 - \nu_2}, 0\right).
\end{equation}
Using Eqs.~(\ref{q1}) and (\ref{def}), we can obtain $_1^sp_{{\rm pass},a}$, $_1^sp_{\rm pass}$.
Also, the error rates in the single-photon state are estimated as
\begin{equation} \label{EQ1}
{_1^s p_{a, (1-\alpha)\alpha}} \leq {_1^s p_{a, (1-\alpha)\alpha}^U} = \left( {^dq_{a, \alpha}} e^{\nu_1} - {^v\!q_{a, \alpha}} e^{\nu_2}\right) \frac{\mu e^{-\mu}}{\nu_1 - \nu_2}.
\end{equation}

Thus, the secret key rate for decoy-state BB84 QKD protocol with detection-efficiency mismatch and passive basis choice is given as
\begin{equation} \label{Knewnewdecoy}
\begin{split}
K &\geq \min\limits_{_1^s p_{a, \alpha}, _1^s q_{a, \alpha}} \sum\limits_{a \in \{ x, z \}} {_1^s p_{{\rm pass},a}}\Bigg[h\left(\frac{1-{_1^s\delta_{a,a}}}{2}\right) \\ &- h\left(\frac{1-\sqrt{{_1^s\delta_{a,a}}+{_1^s\delta_{a,\overline{a}}^2}}}{2}\right)\Bigg]  - {^sp_{{\rm pass}}}f h(Q),
\end{split}
\end{equation}
where
\begin{equation}
\begin{split}
_1^s{\delta_{a,a}} &= \frac{_1^sp_{a,0} - {_1^sp_{a,1}}}{_1^sp_{{\rm pass}, a}}, 
\\ {}_1^s{\delta_{a,\overline{a}}}&= \frac{\sqrt{\eta_{a,0}\eta_{a,1}} \left(_1^s\widetilde{p}_{\overline{a},=} - {_1^s\widetilde{p}_{\overline{a},\neq}}\right)}{_1^sp_{{\rm pass}, a}}.
\end{split}
\end{equation}
The simplified formulas given by Eqs.~(\ref{K_new_case}) and (\ref{K_new_case_2}) are adapted to the case weak coherent pulses and decoy-state method analogously.

\section{Application to the satellite QKD experiment}

\subsection{Channel modeling} \label{Channel modeling}
To validate our theoretical results, we will use a satellite-to-ground QKD model of the quantum communication experiment between the Micius satellite and the Zvenigorod ground station~\cite{Khmelev2024, Khmelev2023_model}.
The parameters of the QKD protocol and the satellite transmitter are presented in Table~\ref{table:param}.

\begin{table}[h]
\begin{center}
\caption{QKD protocol parameters and the Micius satellite's characteristics.} \label{table:param}
\begin{tabular}{|c|c|c|c|}
\hline
\multirow{2}{*}{Bases probability} & $Z$ basis & $p_z$ & 0.5 \\
\hhline{~---}
 & $X$ basis & $p_x$ & 0.5 \\
 \hline \hline
 & {vacuum}& $p_{\mu_v}$  & 0.25 \\
 \hhline{~---}
 Emission probability & {decoy} & $p_{\mu_d}$ & 0.25 \\
 \hhline{~---}
 & {signal} & $p_{\mu_s}$ & 0.5 \\
 \hline \hline
 & vacuum & $\nu_2$ & $10^{-6}$ \\
 \hhline{~---}
 Average photon number & decoy & $\nu_1$ & 0.1 \\
 \hhline{~---}
 & signal & $\mu$ & 0.8 \\
 \hline \hline
\multicolumn{2}{|c|}{Pulse emission frequency}  & $f$ & 100 MHz \\
\hline
\multicolumn{2}{|c|}{Beam divergence}  & $\gamma$ & 10 $\mu$rad \\
\hline
\multicolumn{2}{|c|}{Satellite orbit altitude} & $l$ & 495 km \\
\hline 
\end{tabular}
\end{center}
\end{table}

\begin{table}[h]
\begin{center}
\caption{Parameters of the atmosphere and the Zvenigorod ground station.} \label{table:station}
\begin{tabular}{|c|c|c|c|}
\hline
Atmospheric extinction coefficient & $\varkappa$ & \multicolumn{2}{|c|}{$0.22$} \\
\hline
{Telescope diameter} & $D$ & \multicolumn{2}{|c|}{0.6 m} \\
\hline
Effective area fraction & $\varepsilon$ & \multicolumn{2}{|c|}{73 \%} \\
\hline
Background noise & $Y_0$ & \multicolumn{2}{|c|}{$3\cdot10^{-6}$} \\
\hline \hline
\multirow{4}{*}{Detector efficiency} &  $Z,0$ & $\eta_{z,0}$ & 14.6 \% \\
& $Z,1$ & $\eta_{z,1}$ & 8.6 \% \\
& $X,0$ & $\eta_{x,0}$ & 11.1 \% \\
& $X,1$ & $\eta_{x,1}$ & 5.4 \% \\
\hline \hline
\multirow{4}{*}{Optical error probability} & $Z,0$  & $e_{z,0}$ & 0.62 \% \\
& $Z,1$  & $e_{z,1}$ & 0.5 \% \\
& $X,0$  & $e_{x,0}$ & 1.11 \% \\
& $X,1$  & $e_{x,1}$ & 1.03 \% \\
\hline
\end{tabular}
\end{center}
\end{table}

Table~\ref{table:station} provides the atmospheric conditions and optical characteristics of the Zvenigorod ground station at the QKD experiment. 
The optical efficiency of the ground receiver is included into the detector efficiencies for each receiving channel.
The optical error probabilities $e_{a,\alpha}$ characterize the average probabilities for each receiving channel that a photon hit the erroneous detector due to the alignment and stability of the optical system, both the transmitter and the receiver~\cite{Khmelev2024,Ma2005}. 
They slightly depend on the position of the satellite. Table~\ref{table:station} provides averaged values.

According to Ref.~\cite{Khmelev2024, Khmelev2023_model}, the transmission of the satellite-to-ground communication channels over time can be expressed as:
\begin{equation} \label{eta}
\eta_{a, \alpha}^{\textrm{ch}} (t) = \frac{\varepsilon D^2}{\left(\gamma d \right)^2}  \cdot
10^{-0.4 \varkappa \textrm{csc} \theta_{\textrm{El}} \left( 1 - 0.0012 \textrm{cot}^2 \theta_{\textrm{El}}\right)} \eta_{a, \alpha},
\end{equation}
where $\theta_{\textrm{El}} = \theta_{\textrm{El}}(t)$ is the satellite elevation angle above the horizon of the ground station, 
and $d = d(t)$ is the distance to the satellite. 
Later in the text, the time dependence of these parameters will not be explicitly mentioned, but it will be implied. 
The other parameters are defined in Table~\ref{table:station}.

We simulate the count rates for each detector separately.
Using Eq.~(\ref{eta}), we can obtain the probability of receiving pulses with different intensities $^\nu\!p_{\rm pass}$, $^\nu\!p_{a, \alpha}$ and the probability of receiving errors $^\nu\!q_{a, \alpha}$, where $a \in \{ z, x\}$, $\alpha \in \{0,1\}$, $\nu \in \{ s, d, v\}$. 
The observed values are represented as
\begin{eqnarray} 
{^\nu\!p_{a, \alpha}} &=& \frac{ p_{\mu_\nu}}{2}\left(1 - \left( 1 - \frac{Y_0}{4}\right) e^{-\eta_{a, \alpha}^{\textrm{ch}} \mu_{\nu}} \right),\label{nup}
\\
{^\nu\!p_{\rm pass}} &=&\!\!\! \sum\limits_{a \in \{ z, x\}}\sum\limits_{\alpha = 1}^{2} {^\nu p_{a, \alpha}},\label{nupdet}
\\ 
{^\nu\!q_{a, \alpha}} &=&\!\!\frac{p_{\mu_\nu}}{2}\left(e_0 \frac{Y_0}{4} + e_{a, \alpha}\left( 1 - e^{-\eta_{a, \alpha}^{\textrm{ch}} \mu_{\nu}}\right) \right),
\label{nuq}
\end{eqnarray}
where $e_0 = 1/2$ is the probability of error in the vacuum pulses.
Since $\nu_2 \approx 0$, vacuum decoy state corresponds to the dark counts, so the error probability in this state is $1/2$.
The coefficients $p_{\mu_\nu}/2$ are responsible for the part of the pulses in the total set: $p_{\mu_\nu}$ is the probability of the quantum state preparation with the intensity $\mu_\nu$, and $1/2$ is the probability of sending bit 0 or 1.
In addition, we assume that the background noise $Y_0$ is distributed uniformly among the four detectors, which gives the probability $Y_0/4$ of a dark count in each detector.

Figure~\ref{fig:mod_exp} shows the results of the satellite-to-ground QKD experiment between the Micius satellite and the Zvenigorod ground station~\cite{Khmelev2024}, as well as the count rate approximation using the described channel model. 
Note that these graphs depict the overall number of clicks in a particular detector, i.e., we take into account the quantum states with all intensity types. 
The count rate in this case can be expressed as follows:
\begin{equation} \label{Countrate}
{R_{a, \alpha}^{\rm{sum}}} = f\sum\limits_{\nu \in \{ s, d, v\}} {^\nu\!p_{a, \alpha}},
\end{equation}
where ${}^\nu\!p_{a, \alpha}$ are calculated according to Eqs.~(\ref{nup}) and (\ref{eta}). 
The detector efficiencies given in Table~\ref{table:station} are chosen so that the approximation curves defined by Eqs.~(\ref{eta}) and (\ref{nup}) give the best fit to the experimental data on Fig.~\ref{fig:mod_exp}.

\begin{figure}[h]
\center{\includegraphics[width=1\linewidth]{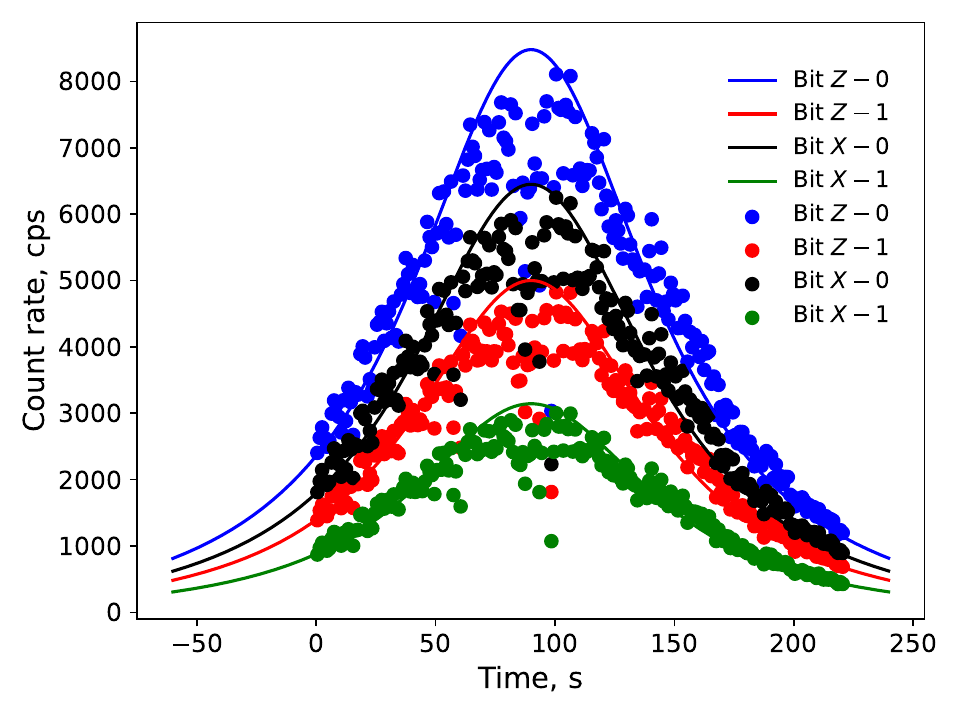}}
\caption{The photon count rates for four receiving channels during the satellite-to-ground QKD experiment~\cite{Khmelev2024}.
The experimental data are shown as dots. The solid lines represent the approximation $R_{a, \alpha}^{\rm{sum}}$ of the data for each receiving channel of the ground receiver given by Eqs.~(\ref{Countrate}), (\ref{nup}) and (\ref{eta}). 
The detector efficiencies in Eq.~(\ref{eta}) are chosen for best fit and are given in Table~\ref{table:station}.
}
\label{fig:mod_exp}
\end{figure}

From Table~\ref{table:station}, we see a high detection-efficiency mismatch in our QKD experiment between the Micius satellite and the Zvenigorod ground station.
Namely, the efficiencies for the channels $Z,0$ and $X,1$ differ by the factor of 2.7.

\subsection{Key rates}
\label{Key rates}
In this section, we calculate the secret key for the satellite-to-ground QKD experiment between the Micius satellite and the Zvenigorod ground station, taking into account the detection-efficiency mismatch between four threshold detectors. 
Furthermore, we compare different formulas for the secret key rate.

The total number of received clicks and the sifted key length in the satellite-to-ground QKD experiment~\cite{Khmelev2024} over 220 seconds are 4956 kbits and 2491 kbits, respectively.
To calculate the secret key, we neglect finite key length effects since our theory works for the asymptotic case of an infinite number of pulses. 
Also, we assume that the detector efficiencies are constant and equal to the values given in Table~\ref{table:station}. Then, Eq.~(\ref{Knewnewdecoy}) gives the secret key length 310,400 bits. 
More detailed quantities, namely 
$^\nu\!p_{\rm pass}$, $^\nu\!p_{a, \alpha}$, and $^\nu\!q_{a, \alpha}$, where $a \in \{ z, x\}$, $\alpha \in \{0,1\}$, and $\nu \in \{ s, d, v\}$, are simulated using Eqs.~(\ref{eta})--(\ref{nuq}).

Figure~\ref{fig:time} shows the plots of the secret key rate vs time calculated using the modification of Eqs.~(\ref{KnewnewH}), (\ref{K_new_case}), and (\ref{K_new_case_2}) for the decoy state method (see Section~\ref{Decoy_state}) with an error correction inefficiency of $f_{\rm ec}=1.44$. 
We compare the secret key rate with the case of no detection efficiency mismatch, but the same average probabilities of detection and error. 
Namely, we replace all $^\nu p_{a, \alpha}$ and $^\nu q_{a, \alpha}$ by
\begin{eqnarray} \label{etaeq}
{^\nu\!p_{eq}} &=&  \frac{1}{4} \sum\limits_{a \in \{ z, x\}}\sum\limits_{\alpha = 0}^{1} {^\nu\!p_{a, \alpha}} ,
\\
\label{eeq}
{^\nu\!q_{eq}} &=& \frac{1}{4} \sum\limits_{a \in \{ z, x\}}\sum\limits_{\alpha = 0}^{1} {^\nu\!q_{a, \alpha}}.
\end{eqnarray}
The difference in secret key rates between no-mismatch and mismatch cases is less than 9\%. 
Also, Eq.~(\ref{K_new_case}) results in a decrease in a secret key rate less than 10\% in comparison with Eq.~(\ref{KnewnewH}).

\begin{figure}[h]
\center{\includegraphics[width=1\linewidth]{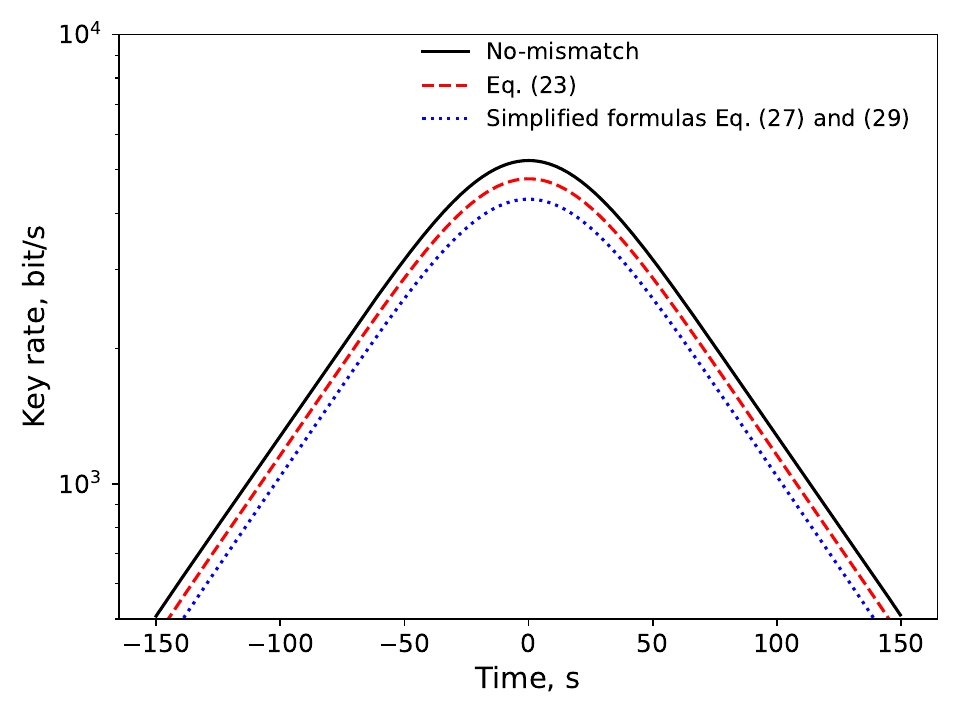}}
\caption{Simulated secret key rate over a satellite passage.
The no-mismatch case is shown by a solid line. 
The red dashed line and blue dotted line represent the detection-efficiency mismatch for Eqs.~(\ref{KnewnewH}) and (\ref{K_new_case}), respectively.
Eq.~(\ref{K_new_case_2}) yields similar results to Eq.~(\ref{K_new_case}) with the difference less than 0.1\% for our data; hence, these approaches relate to the same line.
The satellite passage is symmetrically aligned with respect to the highest elevation angle that corresponds to zero time.}
\label{fig:time}
\end{figure}

Figure~\ref{fig:key_eta} shows the dependence of key generation rate on the detection-efficiency mismatch as compared to the no-mismatch case under simplifying assumption $\eta=\frac{\eta_{z,1}}{\eta_{z,0}}=\frac{\eta_{x,1}}{\eta_{x,0}}$.
We see that exact Eq.~(\ref{KnewnewH}) and the approach based on the simplified Eqs.~(\ref{K_new_case}) and (\ref{K_new_case_2}) produce the same key rate for $\eta>0.95$, but differ significantly for $\eta<0.95$. 

\begin{figure}[h]
\center{\includegraphics[width=1\linewidth]{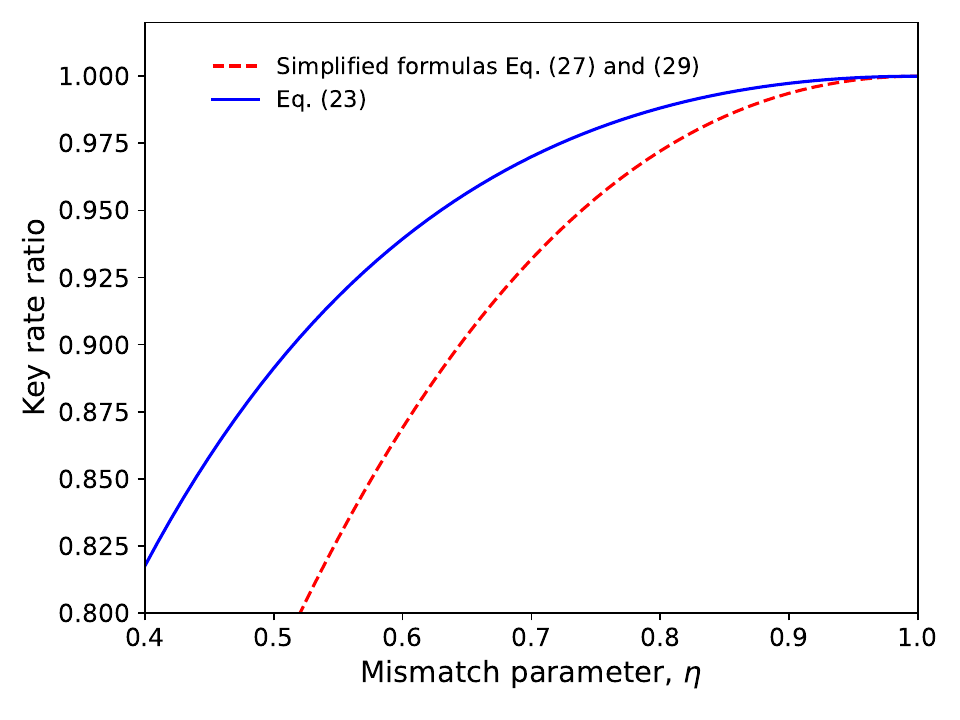}}
\caption{The ratio of the key generation rate with and without detection-efficiency mismatch dependent on the mismatch parameter $\eta=\frac{\eta_{z,1}}{\eta_{z,0}}=\frac{\eta_{x,1}}{\eta_{x,0}}$ (note that the simplified assumptions is used only in this plot for illustrative purposes). 
The blue one corresponds to the calculation of the mismatch case by exact Eq.~(\ref{KnewnewH}), and the red one by simplified Eq.~(\ref{K_new_case}).
Eq.~(\ref{K_new_case_2}) provides similar results as Eq.~(\ref{K_new_case}), so that the curves are indistinguishable.}
\label{fig:key_eta}
\end{figure}

Finally, we adapt the approach from Ref.~\cite{Wang2025}, which uses the decoy-state BB84 protocol with passive basis choice and detection efficiency mismatch, to our case of asymptotic key without multiphoton clicks.
As a result, the formula for the secret key rate for memoryless detectors (see Eq.~(B7) of Ref.~\cite{Wang2025}) is reduced to the following form:
\begin{gather} 
K =
\max\left(\mathcal{B}_1 \left( 1 - h\left( \mathcal{B}_e\right)\right)-\lambda_{EC}, 0\right),\\
\mathcal{B}_1 = \mathcal{B}_1^{\rm{decoy}} - q_Z,
\end{gather}
where $ \lambda_{EC}$ is the error correction term,
$\mathcal{B}_1$ is the probability of receiving a single-photon signal state, 
$\mathcal{B}_1^{\rm{decoy}}$ is the probability of receiving a single-photon signal state estimated from the decoy state method, 
$\mathcal{B}_e$ is the error probability in the received signal single-photon states, 
$q_Z$ quantifies the fraction of 0-photon key rounds.

Unfortunately, the approach of Ref.~\cite{Wang2025} works only for a small detection-efficiency mismatch, while, in our practical satellite setup, $\eta_{z,0}$ is almost 2.7 times larger than $\eta_{x,1}$.
This leads to loose bounds of the phase errors and zero key rate.
The details are presented in Appendix~\ref{key_rate_old}.

\section{Conclusion} \label{Conclusion}
We have addressed one of the main issues for practical QKD systems, the detection-efficiency mismatch, and have provided an analytical approach for the secret key rate estimation in the BB84 protocol with the passive basis choice for a practical satellite QKD system with high detection-efficiency mismatch. 
The main result is Eq.~(\ref{KnewnewH}) for the secret key rate as well as its simplifications given in Eqs.~(\ref{K_new_case}) and (\ref{K_new_case_2}) providing slightly reduced rates.
Additionally, we have examined the decoy state method for QKD systems that utilize passive basis choice and have detection-efficiency mismatches [see Eq.~(\ref{Knewnewdecoy})]. 
The presented results can deal with high detection-efficiency mismatch and in this sense complement a different analysis from Ref.~\cite{Wang2025}, which considers a more general scenario, but requires small detection-efficiency mismatch.

We have shown that the decrease in key rate due to the detection-efficiency mismatch is almost negligible whenever the ratios of the detector efficiencies fall within the range of experimental values $[0.5, 1]$.
To validate our theory, we have calculated the key rate for the satellite-to-ground QKD experiment, which is characterized by a high detection-efficiency mismatch.

The limitations of the presented paper are the asymptotic case, the constant detection efficiencies, and the single-photon approximation on Bob's side.  
The latter assumption means that the eavesdropper sends no more than one photon to the legitimate receiver if the legitimate sender’s pulse is a single photon. 
Multiple photons on Bob's side lead to double clicks of detectors, hence, this limitation can overcome be counting these clicks and the estimation of the number of such positions~\cite{Trushechkin2022,Wang2025}. 

Let us also emphasise that, in our experiment, the values of detector efficiencies are the results of fitting the experimental data rather than directly observed quantities (see Section~\ref{Channel modeling}), which stresses the importance of security analysis for variable or not fully characterised detector efficiencies~\cite{Wang2025}.

Finally, the further research can be directed to the mismatch of the dark count rates. The difference between dark count rate of detectors also may affect the security and secret key rate~\cite{Wang2025}.
Another challenge is the variable detection-efficiency mismatch during quantum key distribution. 
It is the scenario where the detection-efficiency mismatch is not constant but is partially within Eve's control. 
In other words, we will have not only the imbalance that arose due to setup imperfection but also the one that Eve introduced. 
These attacks are detailed and experimentally tested in these works~\cite{Zhao2008, Sajeed2015, Pirandola2020}, and Ref.~\cite{Grasselli2025} provides a way to estimate such imperfections for the active basis choice.

\begin{acknowledgments}
E.I.I., A.V.K., and V.L.K. acknowledge support from the Russian Science Foundation grant No. 25-22-00342 (Efficiency and security of realistic satellite-to-ground quantum key distribution systems).
\end{acknowledgments}

\appendix

\section{Key generation rate calculations} \label{denmat}
Here we derive Eqs.~(\ref{KnewnewH}), (\ref{K_new_case}), and (\ref{K_new_case_2}) for the key rate.

Firstly, we determine the form of the initial density matrix $\rho_{AB}$ that satisfies the optimization problem Eq.~(\ref{Knew}).
This form is determined by the constraints on the observable values $\Gamma_i$ [see Eqs.~(\ref{G}) and (\ref{tG})] and additional symmetries. 
These symmetries are identified in the proposition proved below.

\begin{proposition}
\label{Proposition1} 
Let $\Phi$ be a positive trace-preserving linear mapping on the operator space $\mathcal {L} \left( \mathcal{H}_A \otimes \mathcal{H}_B \otimes \mathcal{H}_{\widetilde{A}} \otimes \mathcal{H}_{\widetilde{B}}\right)$ satisfying the conditions $\Phi^{\dagger}\left( \Gamma_i \right) = \Gamma_i$ for all $i$. 
In addition, let $p_{\rm pass}$ be uniquely determined by the observed values $\Tr \Gamma_i \rho_{AB\widetilde{A}\widetilde{B}}$ and
\begin{equation} \label{ys}
\Phi  \left(\mathcal{X} \left( \mathcal{M}\left( \rho_{AB\widetilde{A}\widetilde{B}} \right) \right) \right) =  \mathcal{X} \left( \mathcal{M}\left(\Phi \left( \rho_{AB\widetilde{A}\widetilde{B}} \right) \right) \right),
\end{equation}
where $\mathcal{M}$ and $\mathcal{X}$ maps are defined in Eqs.~(\ref{rho2}) and (\ref{rhoZX}), respectively.
 
Then
\begin{equation} \label{T1}
\sup\limits_{\rho_{AB\widetilde{A}\widetilde{B}} \in \textbf{S}} H(X| B  \widetilde{\textbf{B}})_{\widetilde\rho^{\rm sifted}} = \sup\limits_{\rho_{AB\widetilde{A}\widetilde{B}} \in \textbf{S}'} H(X| B \widetilde{\textbf{B}})_{\widetilde\rho^{\rm sifted}},
\end{equation}
where $\widetilde\rho^{\rm sifted}=\mathcal{M}\left( \rho_{AB\widetilde{A}\widetilde{B}}\right)$ [see Eq.~(\ref{rho2})], $\widetilde{\textbf{B}}=\widetilde{A}\widetilde{B}$,
\mbox{$\textbf{S}' = \textbf{S} \bigcap \textrm{Im}\left( \Phi\right)$},
\textbf{S} is defined in Eq.~(\ref{S}), and ${\rm Im}(\Phi)$ denotes the $\Phi$ image.
\end{proposition}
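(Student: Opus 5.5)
We prove the two inequalities separately; only one direction needs work. Since $\textbf{S}' = \textbf{S}\cap{\rm Im}(\Phi)\subseteq \textbf{S}$, the supremum over $\textbf{S}'$ is at most the supremum over $\textbf{S}$. For the reverse inequality we show that $\Phi$ maps $\textbf{S}$ into $\textbf{S}'$ without decreasing the objective. Then for any $\rho\in\textbf{S}$ the state $\Phi(\rho)$ is admissible in $\textbf{S}'$ and has at least the same conditional entropy, so the supremum over $\textbf{S}'$ is at least that over $\textbf{S}$.

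\emph{Step 1: $\Phi(\textbf{S})\subseteq \textbf{S}'$.} Take $\rho\in\textbf{S}$. Positivity of $\Phi$ gives $\Phi(\rho)\geq 0$. By duality, $\Tr\Gamma_i\Phi(\rho)=\Tr\Phi^\dagger(\Gamma_i)\rho=\Tr\Gamma_i\rho=\gamma_i$, so all constraints in Eq.~(\ref{S}) are preserved. Trivially $\Phi(\rho)\in{\rm Im}(\Phi)$. Hence $\Phi(\rho)\in\textbf{S}'$.

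\emph{Step 2: monotonicity of the objective.} Write the objective via Eq.~(\ref{HD}) as
\[
H(X_A|B\widetilde{\textbf{B}}) = -D\bigl(\sigma\,\|\,I\otimes\Tr_A\sigma\bigr),\qquad \sigma=\mathcal{X}\mathcal{M}(\rho)/p_{\rm pass}.
\]
The normalization $p_{\rm pass}$ is a function of the observed values $\Tr\Gamma_i\rho$ by hypothesis. By Step 1 these values are unchanged under $\Phi$, so $p_{\rm pass}$ is the same for $\rho$ and $\Phi(\rho)$. By the covariance assumption (\ref{ys}), $\mathcal{X}\mathcal{M}(\Phi(\rho))=\Phi(\mathcal{X}\mathcal{M}(\rho))$, and therefore $\sigma_{\Phi(\rho)}=\Phi(\sigma_\rho)$. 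It remains to show
\[
D\bigl(\Phi(\sigma)\,\|\,I\otimes\Tr_A\Phi(\sigma)\bigr)\le D\bigl(\sigma\,\|\,I\otimes\Tr_A\sigma\bigr).
\]
We would get this from the data-processing inequality for relative entropy, once we know $\Phi(I\otimes\Tr_A\sigma)=I\otimes\Tr_A\Phi(\sigma)$. That commutation requires $\Phi$ to act compatibly with the partial trace over $A$. If it does not, we would switch to the representation $H(X_A|B\widetilde{\textbf{B}})=-\min_\tau D(\sigma\|I\otimes\tau)$. In that form we only need $\Phi(I\otimes\tau)$ to be of the form $I\otimes\tau'$, which holds for the symmetry maps used later (local unitaries, permutations, and mixtures of these).

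\emph{Main obstacle.} Data processing is proved for completely positive trace-preserving maps, while the statement assumes only positivity. There are two routes. The first uses the result of Müller-Hermes and Reeb, which establishes data processing for positive trace-preserving maps. The second restricts to the intended case, where $\Phi$ is a convex mixture of unitary conjugations $\Phi(\rho)=\sum_k w_k U_k\rho U_k^\dagger$. In that case each term is CPTP and satisfies $H(\cdot)_{U\sigma U^\dagger}=H(\cdot)_\sigma$, and concavity of the conditional entropy (equivalently, joint convexity of $D$, as noted after Eq.~(\ref{HD})) gives $H(X_A|B\widetilde{\textbf{B}})_{\Phi(\sigma)}\ge\sum_k w_k H(X_A|B\widetilde{\textbf{B}})_{U_k\sigma U_k^\dagger}=H(X_A|B\widetilde{\textbf{B}})_\sigma$. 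This second route also resolves the $I\otimes\tau$ compatibility issue from Step 2, provided each $U_k$ acts on $A$ as a unitary that commutes with the $X$-basis decoherence up to relabeling.
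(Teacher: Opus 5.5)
Your route is the paper's: $\textbf{S}'\subseteq\textbf{S}$ for one direction, and for the other, feasibility of $\Phi(\rho)$ via $\Phi^\dagger(\Gamma_i)=\Gamma_i$, invariance of $p_{\rm pass}$, the covariance (\ref{ys}), and monotonicity of the relative entropy under positive trace-preserving maps (M\"uller-Hermes--Reeb). The one step you leave open is $\Phi(I\otimes\Tr_{A}\omega)=I\otimes\Tr_{A}\Phi(\omega)$ for $\omega=\mathcal{X}\mathcal{M}(\rho)$, or your weaker $\min_\tau$ requirement. You are right that it is needed and that (\ref{ys}) alone does not supply it. The paper also skips it: it attributes the penultimate equality in Eq.~(\ref{entropy_ineq}) to (\ref{ys}), which only handles the first argument of $D$. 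Your fallback, however, proves the proposition only for special maps, not for every $\Phi$ the hypotheses allow. Your description of those maps is also slightly off: $\Phi_1$ is built from complex conjugation, which is not a mixture of unitary conjugations. It does commute with $\Tr_{A}$, so your first route still covers it.

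The step follows from the stated hypotheses, reading (\ref{ys}) as an identity of maps and assuming all $\eta_{a,\alpha}>0$.
\begin{itemize}
\item $\Gamma_1,\Gamma_2,\Gamma_5,\Gamma_6$ are positive multiples of the four projectors $\ket{\alpha\beta}\!\bra{\alpha\beta}_{AB}\otimes\ket{zz}\!\bra{zz}$, and $\Phi^\dagger$ fixes them. Hence $\Phi$ maps states on $Q_z={\rm span}\{\ket{\alpha}_A\ket{\beta}_B\ket{zz}\}$ to states on $Q_z$ with the same diagonal.
\item Positivity on $(\ket{j}+e^{i\varphi}\ket{k})/\sqrt{2}$ then forces $\Phi(\ket{j}\!\bra{k})\in{\rm span}\{\ket{j}\!\bra{k},\ket{k}\!\bra{j}\}$. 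So $\Phi(\ket{\alpha}\!\bra{\alpha}_A\otimes\tau)=\ket{\alpha}\!\bra{\alpha}_A\otimes\Psi_\alpha(\tau)$, and $A$-off-diagonal terms stay $A$-off-diagonal.
\item Apply (\ref{ys}) to $\ket{\alpha}\!\bra{\alpha}_A\otimes\tau\otimes\ket{zz}\!\bra{zz}$. Here $\mathcal{X}$ fully mixes $A$. With $N_z={\rm diag}(\sqrt{\eta_{z,0}},\sqrt{\eta_{z,1}})$ one gets $\Psi_{\alpha'}(N_z\tau N_z)=N_z\Psi_\alpha(\tau)N_z$ for all $\alpha,\alpha'$. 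Since $N_z$ is invertible, $\Psi_0=\Psi_1=:\Psi_z$.
\item Therefore $\Phi(I_A\otimes\tau)=I_A\otimes\Psi_z(\tau)$ and $\Tr_{A}\Phi(\omega)=\Psi_z(\Tr_{A}\omega)$ on this block.
\item The $xx$ block is identical in the $\pm$ basis, using $\Gamma_3,\Gamma_4,\Gamma_7,\Gamma_8$ and invertibility of $M^B_x$ on the one-photon subspace.
\end{itemize}
Since $\mathcal{X}\mathcal{M}(\rho)$ is block diagonal in these two blocks, this is exactly the compatibility you need, for every admissible $\Phi$. Your Step~2 then goes through with M\"uller-Hermes--Reeb, as in the paper.
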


The proposition is similar to that provided in Ref.~\cite{Trushechkin2022}, but with some modifications. So, let us give a brief proof.

\begin{proof}
From Eq.~(\ref{rhoZX}), we have \hbox{$\rho^{\rm{sifted}}_{X_{A} B\widetilde{\textbf{B}}}=\mathcal{X} \mathcal{M} (\rho_{AB\widetilde{A}\widetilde{B}})$} and $\rho_{B\widetilde{\textbf{B}}}=\Tr_{A}\mathcal{X} \mathcal{M} (\rho_{AB\widetilde{A}\widetilde{B}})$.
Then the conditional entropy is estimated from above as
\begin{equation} \label{entropy_ineq}
\begin{aligned}
&-H(X| B \widetilde{\mathbf{B}})_{\rho_{AB\widetilde{A}\widetilde{B}}} \\
&= D\left( \rho_{X_{A} B\widetilde{\mathbf{B}}} \;\middle\|\; I \otimes \rho_{B\widetilde{\mathbf{B}}} \right) \\
&= D\left( \mathcal{X} \mathcal{M} (\rho_{AB\widetilde{A}\widetilde{B}}) \;\middle\|\; I \otimes \operatorname{Tr}_{A}\mathcal{X} \mathcal{M} (\rho_{AB\widetilde{A}\widetilde{B}}) \right) \\
&\geq D\left( \Phi \mathcal{X} \mathcal{M} (\rho_{AB\widetilde{A}\widetilde{B}}) \;\middle\|\; \Phi \left(I \otimes \operatorname{Tr}_{A}\mathcal{X} \mathcal{M} (\rho_{AB\widetilde{A}\widetilde{B}}) \right) \right) \\
&= D\left( \mathcal{X} \mathcal{M} (\Phi \rho_{AB\widetilde{A}\widetilde{B}}) \;\middle\|\; I \otimes \operatorname{Tr}_{A}\mathcal{X} \mathcal{M} (\Phi \rho_{AB\widetilde{A}\widetilde{B}}) \right) \\
&= -H(X| B \widetilde{\mathbf{B}})_{\Phi \left(\rho_{AB\widetilde{A}\widetilde{B}} \right)}.
\end{aligned}
\end{equation}
Here we use the monotonicity of the quantum relative entropy under the action of a positive trace-preserving linear map on both arguments~\cite{Muller_Hermes2017}. 
The penultimate equality follows from the condition (\ref{ys}).

In view of inequality (\ref{entropy_ineq}) and conditions $\Phi^{\dagger}\left( \Gamma_i \right) = \Gamma_i$ for all $i$, it follows that 
\begin{equation}
\begin{gathered}
\sup\limits_{\rho_{AB\widetilde{A}\widetilde{B}} \in \textbf{S}} H(X| B \widetilde{\textbf{B}})_{\rho} \leq \sup\limits_{\rho_{AB\widetilde{A}\widetilde{B}} \in \textbf{S}'} H(X| B \widetilde{\textbf{B}})_{\rho}.
\end{gathered}
\end{equation}
The reverse inequality also holds since $\textbf{S}'$ is a subset of $\textbf{S}$. 
Taking into account that
\begin{equation}
\begin{gathered}
 H(X| B \widetilde{\textbf{B}})_{\rho} =p_{\rm pass} H(X| B \widetilde{\textbf{B}})_{\widetilde\rho},
\end{gathered}
\end{equation}
we obtain Eq.~(\ref{T1}). 
The proposition has been proved.
\end{proof}

Note that if $\Phi$ is a projector, i.e., $\Phi = \Phi^2$ then $\textbf{S}'$ can be rewritten as
\begin{equation} \label{prop_remark}
    \textbf{S}' = \textbf{S} \bigcap \{\rho_{AB\widetilde{A}\widetilde{B}}||\Phi\left(\rho_{AB\widetilde{A}\widetilde{B}}\right) = \rho_{AB\widetilde{A}\widetilde{B}}\}.
\end{equation}

That is, it is sufficient to optimize over the density matrices satisfying the symmetry given by Eq.~(\ref{prop_remark}). 
Consider the projectors
\begin{equation}
\begin{gathered}
\Phi_1 (\rho) = \frac{1}{2}\left( \rho + \rho^*\right),
\\
 \Phi_2(\rho)= \frac{1}{2}\left(\rho + R_{\Phi,2} \rho R_{\Phi,2}^+\right),
 \end{gathered}
\end{equation}
where $^*$ denotes the complex conjugate of the elements of $\rho$ in the basis $z$ and 
\begin{equation}
\begin{split}
 R_{\Phi,2} &= \left( \ket{z}_{\widetilde{A}}\! \bra{z} \otimes Z_A + \ket{x}_{\widetilde{A}}\! \bra{x} \otimes X_A \right)  \\ 
 &\otimes \left(\ket{z}_{\widetilde{B}}\!\bra{z} \otimes Z_B + \ket{x}_{\widetilde{B}}\!\bra{x} \otimes X_B \right).
\end{split}
\end{equation}
These projectors satisfy Proposition~\ref{Proposition1}. 
According to Eq.~(\ref{prop_remark}), the set of states $\rho_{AB\widetilde{A}\widetilde{B}}$ is restricted to those that satisfy $\Phi_i(\rho_{AB\widetilde{A}\widetilde{B}}) = \rho_{AB\widetilde{A}\widetilde{B}}$, $i=1,2$.
Also, we only consider subsystems with $\ket{zz}_{\widetilde{A}\widetilde{B}}\!\bra{zz}$ and $\ket{xx}_{\widetilde{A}\widetilde{B}}\!\bra{xx}$ because the remaining parts are eliminated during the sifting. 
As a result, all elements of the matrix $\rho_{AB\widetilde{A}\widetilde{B}}$ are real and satisfy the equalities
\begin{equation}
\begin{gathered}
\left(Z\otimes Z \rho_{AB} Z\otimes Z \right) \otimes \ket{zz}_{\widetilde{A}\widetilde{B}}\!\bra{zz} =\rho_{AB} \otimes \ket{zz}_{\widetilde{A}\widetilde{B}}\!\bra{zz}, \\
\left(X\otimes X \rho_{AB} X\otimes X \right) \otimes \ket{xx}_{\widetilde{A}\widetilde{B}}\!\bra{xx} =\rho_{AB} \otimes \ket{xx}_{\widetilde{A}\widetilde{B}}\!\bra{xx},
\end{gathered}
\end{equation}
according to $\Phi_1$ and $\Phi_2$.

Using the symmetries arising from $\Phi_1$ and $\Phi_2$, and the restrictions on observables given by Eqs.~(\ref{G}) and (\ref{tG}),
we obtain the density matrix after the sifting procedure [Eq.~(\ref{rho3_1})] in the following form:
\begin{equation} \label{appendix_p_sift}
\begin{split}
p_{\rm pass} \widetilde{\rho}
^{\rm sifted}&=p_{\rm pass} \mathcal{M}(\rho_{AB\widetilde{A}\widetilde{B}}) \\
&= \sum\limits_{a \in \{x,z\}}p_a^2\ket{a}_{\widetilde{\textbf{B}}}\!\bra{a} \otimes \rho_{AB}^{a},
\end{split}
\end{equation}
where
\begin{equation}
\begin{gathered}
\rho_{AB}^{a}= \begin{pmatrix}
\eta_{a,0}\widetilde{p}_{a,00} & 0 & 0 & e_{\overline{a}}+w \\
0 & \eta_{a,1}\widetilde{p}_{a,01} & e_{\overline{a}}-w & 0 \\
0 & e_{\overline{a}}-w & \eta_{a,0}\widetilde{p}_{a,10} & 0 \\
e_{\overline{a}}+w & 0 & 0 & \eta_{a,1}\widetilde{p}_{a,11} \\
\end{pmatrix},
\end{gathered}
\end{equation}
$w$ is an undefined parameter, and $\overline{a}$ is an orthogonal basis to $a$. Further notations are introduced as
\begin{equation}
\begin{gathered} \label{notation1}
e_{\overline{a}} = \frac{\eta_{a}}{4} \left(\widetilde{p}_{\overline{a},=} - \widetilde{p}_{\overline{a},\neq}\right), \qquad \eta_a=\sqrt{\eta_{a,0}\eta_{a,1}}, \\
\widetilde{p}_{a,=} =\widetilde{p}_{a,00} + \widetilde{p}_{11}, \qquad \widetilde{p}_{a,\neq} =\widetilde{p}_{a,01} + \widetilde{p}_{10}, \\ 
\widetilde{p}_{a,ij} = \frac{p_{a,ij}}{\eta_{a,j}}, \qquad p_{a,\alpha}=p_{a,\alpha \alpha}+p_{a,(1-\alpha)\alpha}, \\
p_{{\rm pass},a} = p_{a,0}+p_{a,1}, \qquad p_{{\rm pass}}=\sum\limits_{a\in\{x,z\}}p_a^2p_{{\rm pass}, a}.
\end{gathered}
\end{equation}

Eq.~(\ref{appendix_p_sift}) is an expression for the density matrix after the virtual measurement and sifting procedure. 
So, the next step is to determine the key rate formulas for different approaches: the exact Eq.~(\ref{KnewnewH}) and the simplified Eqs.~(\ref{K_new_case}) and (\ref{K_new_case_2}).

\subsection{Key rate Eq.~(\ref{KnewnewH})} \label{appendix_key_rate_general}
To derive Eq.~(\ref{KnewnewH}) for the key rate we use Eq.~(\ref{Knew}) with the supremum over matrices $\rho_{AB\widetilde{A}\widetilde{B}} \in \textbf{S}$, which have the form given by Eq.~(\ref{appendix_p_sift}).

Eq.~(\ref{tG}) requires the density matrix $\widetilde{\rho}_{X_{A}B \widetilde{\textbf{B}}}^{\rm sifted}$ after the decoherence in Alice's subsystem [see Eq.~(\ref{rhoZX})]. 
For convenience, in order to express the values of Alice's register as $0/1$ rather than $+/-$, we additionally apply the Hadamard transform  on it:
\begin{multline} \label{rhoAB2}
p_{\rm pass} \widetilde{\rho}_{X_{A}B \widetilde{\textbf{B}}}^{\rm sifted} \rightarrow  p_{\rm pass}H_A\mathcal{X}\left(\widetilde{\rho}
^{\rm sifted} \right)H_A \\
= \sum\limits_{a\in\{x,z\}}\ket{a}_{\widetilde{\textbf{B}}}\!\bra{a} \otimes \sum\limits_{i=0}^1 \ket{i}_{A}\!\bra{i} \otimes p_a^2 \left(\textrm{D}_{0}^{a} + (-1)^{i}\textrm{D}_{1}^{a} \right),
\end{multline}
where
\begin{equation} \label{D_01^a} 
\textrm{D}_{0}^{a} = \frac{1}{2}
\begin{pmatrix}
 p_{a,0} & 0 \\
 0 & p_{a,1} \\
\end{pmatrix}, \qquad
\textrm{D}_{1}^{a} =
\begin{pmatrix}
 0 &  e_{\overline{a}} \\
  e_{\overline{a}} & 0 \\
\end{pmatrix}.
\end{equation} 
So, the density matrix of the \mbox{subsystem $B \widetilde{\textbf{B}}$} is given as
\begin{equation} \label{rhobb}
\begin{gathered}
\widetilde{\rho}_{B \widetilde{\textbf{B}}} = \Tr_{A} \widetilde{\rho}_{X_{A}B \widetilde{\textbf{B}}}^{\rm sifted} = \frac{2}{p_{\rm pass}} \sum\limits_{a\in\{x,z\}}\ket{a}_{\widetilde{B}}\!\bra{a} \otimes p_a^2\textrm{D}_{0}^{a}.
\end{gathered}
\end{equation}

Then we use the density matrices from Eqs.~(\ref{rhoAB2}) and (\ref{rhobb}) to calculate the entropy term in Eq.~(\ref{Knew}).
The conditional entropy [Eq.~(\ref{HD})] can be written as follows:
\begin{equation} \label{hxbab}
\begin{split}
H(X_{A}|B\widetilde{\textbf{B}})_{\widetilde{\rho}^{\rm sifted}} =   &- \Tr \widetilde{\rho}_{X_{A}B \widetilde{\textbf{B}}}^{\rm sifted}\log_2 \widetilde{\rho}_{X_{A}B \widetilde{\textbf{B}}}^{\rm sifted} 
\\&+ \Tr \widetilde{\rho}_{B \widetilde{\textbf{B}}} \log_2 \widetilde{\rho}_{B \widetilde{\textbf{B}}}. \end{split}
\end{equation}
Also, the entropy of the subsystem $B \widetilde{\textbf{B}}$ has the following form:
\begin{equation} \label{hrhobb}
\begin{split}
H \left( \widetilde{\rho}_{B \widetilde{\textbf{B}}} \right) &= -\Tr \widetilde{\rho}_{B \widetilde{\textbf{B}}} \log_2 \widetilde{\rho}_{B \widetilde{\textbf{B}}}   \\  &=\sum\limits_{a \in \{ x, z \}}\sum\limits_{\alpha = 0}^1 - \frac{p_a^2 p_{a,\alpha}}{p_{\rm pass}}   \log_2 \frac{p_a^2 p_{a,\alpha}}{p_{\rm pass}} \\ 
&= \sum_{a\in\{x,z\}}
\!
\frac{p_a^2p_{{\rm pass},a}}{p_{\rm pass}}h\!\left(\frac{p_{a,\alpha}}{p_{{\rm pass}, a}}\right) 
+ h\!\left( \frac{p_z^2p_{{\rm pass},z}}{p_{{\rm pass}}}\right).
\end{split}
\end{equation}

Now we consider the matrix $\widetilde{\rho}_{X_AB\widetilde{\textbf{B}}}^{\rm sifted}$ [see Eq.~\ref{rhoAB2}]. 
It has the block-diagonal structure, and we may deal with the basis independently. 
The eigenvalues of the matrix are
\begin{equation} \label{lambda1}
\begin{gathered}
\lambda_{\alpha}^{a} =\frac{p_a^2}{2p_{{\rm pass}}}\!\left\lbrace \frac{p_{{\rm pass},a}}{2} + (-1)^{\alpha} \sqrt{\left( \frac{p_{a,0}-p_{a,1}}{2} \right)^2 + 4e_{\overline{a}}^2} \right\rbrace.
\end{gathered}
\end{equation}

Note that the eigenvalues for $\textrm{D}_0^a+\textrm{D}_1^a$ and $\textrm{D}_0^a-\textrm{D}_1^a$ are equal. 
Thus, the final key generation rate will be defined as Eq.~(\ref{KnewnewH}), that is,
\begin{multline} \label{Knewnew1}
K \! \geq  {p_{\rm pass}} \! \! \!\sum\limits_{a \in \{ x, z \}}\sum\limits_{\alpha = 0}^1
\left(\!- \frac{p_a^2p_{a,\alpha}}{p_{\rm pass}}   \log_2 \frac{p_a^2p_{a,\alpha}}{p_{\rm pass}} +\!\lambda_{\alpha}^a \log_2 \lambda_{\alpha}^a \right) \\ - p_{{\rm pass}}f h(Q)\\
= \!\!\!\!\!\sum\limits_{a \in \{ x, z \}} p_a^2p_{{\rm pass},a}\Bigg[h\!\left(\frac{1-\delta_{a,a}}{2}\right) - h\!\left(\frac{1-\sqrt{\delta_{a,a}^2+\delta_{a,\overline{a}}^2}}{2}\right)\!\Bigg] \\ - p_{{\rm pass}}f h(Q),
\end{multline}
where
\begin{equation} \label{tau_xi}
\begin{gathered}
\delta_{a,a} = \frac{p_{a,0} - p_{a,1}}{p_{{\rm pass}, a}}, \qquad \delta_{a,\overline{a}} = \frac{\eta_{a} \left(\widetilde{p}_{\overline{a},=} - \widetilde{p}_{\overline{a},\neq}\right)}{p_{{\rm pass}, a}}.
\end{gathered}
\end{equation}

\subsection{Simplified key rate Eq.~(\ref{K_new_case})} \label{denmat_1}
Consider the positive trace-preserving linear map
\begin{equation} \label{Phi_3}
\begin{gathered}
\Phi_3(\rho)= \frac{1}{2}\left(\rho + R_{\Phi,3} \rho R_{\Phi,3}^+\right), \end{gathered}
\end{equation}
where
 \begin{equation}
\begin{split}
 R_{\Phi,3} = &\left( 
 \ket{z}_{\widetilde{A}}\!\bra{z} \otimes X_A + \ket{x}_{\widetilde{A}}\!\bra{x} \otimes Z_A \right)  \\ \otimes &\left(\ket{z}_{\widetilde{B}}\!\bra{z} \otimes X_B + \ket{x}_{\widetilde{B}}\!\bra{x} \otimes Z_B \right).
\end{split}
\end{equation}
This projector does not satisfy the commutation condition Eq.~(\ref{ys}) of Proposition~\ref{Proposition1}. 
However, using the monotonicity of the quantum relative entropy in Eq.~(\ref{HD}), we can estimate the key rate from below.
According to the fourth line of Eq.~(\ref{entropy_ineq}), we can restrict the set of $\rho_{AB}$ to those that meet the equalities 
\begin{equation}
\begin{gathered}
\left(X\otimes X \rho_{AB} X\otimes X \right) \otimes \ket{zz}_{\widetilde{A}\widetilde{B}}\!\bra{zz} =\rho_{AB} \otimes \ket{zz}_{\widetilde{A}\widetilde{B}}\!\bra{zz}, \\
\left(Z\otimes Z \rho_{AB} Z\otimes Z \right) \otimes \ket{xx}_{\widetilde{A}\widetilde{B}}\!\bra{xx} =\rho_{AB} \otimes \ket{xx}_{\widetilde{A}\widetilde{B}}\!\bra{xx}.
\end{gathered}
\end{equation}

Application of the map $\Phi_3$ to Eqs.~(\ref{rhoAB2}) and (\ref{D_01^a}) corresponds to the replacement of $p_{a,i}$ by
\begin{equation} \label{second_p}
    p_{a,0}=p_{a,1}=\frac{\eta_{a,0} + \eta_{a,1}}{4} = 2p'_{a}.
\end{equation}
 
Using the symmetries $\Phi_1-\Phi_3$, we obtain the density matrix after the sifting procedure and decoherence in Alice’s subsystem from Eq.~(\ref{rhoZX}): 
\begin{equation} \label{rho3_cond2}
p_{\rm pass} \mathcal{X}\left(\widetilde{\rho}
^{\rm sifted}\right)
\rightarrow \sum\limits_{a \in \{x,z\}}p_a^2
\ket{a}_{\widetilde{\textbf{B}}}\!\bra{a} \otimes {\rho'}^{a}_{AB},
\end{equation}
\begin{equation}  \label{rho3AB_cond2}
{\rho'}^{a}_{AB}= \begin{pmatrix}
p'_{a} & 0 & 0 & e_{\overline{a}} \\
0 & p'_{a} & e_{\overline{a}} & 0 \\
0 & e_{\overline{a}} & p'_{a} & 0 \\
e_{\overline{a}} & 0 & 0 & p'_{a} \\
\end{pmatrix}.
\end{equation}

Note that $\Phi_3$ is a trace-preserving map, so the values $p_{{\rm pass},a}$ and $p_{\rm pass}$ are preserved. 
Thus, according to the relationships (\ref{second_p}), we simplify the formula (\ref{KnewnewH}) for the key rate to the following form, which is exactly Eq.~(\ref{K_new_case}):

\begin{equation} \label{Knewnew2}
\begin{split}
K &\geq 
\!\!\!\sum\limits_{a \in \{ x, z \}} \!\!p_a^2 p_{{\rm pass},a} \left( 1 -h\!\left(\frac{1-\Delta_a}{2} \right) \right)
\\ &- \sum\limits_{a \in \{ x, z \}} {p_a^2} p_{{\rm pass},a} f h(Q),
\end{split}
\end{equation}
where
\begin{equation} 
\Delta_a = \frac{\eta_a\left( \widetilde{p}_{\overline{a},=} - \widetilde{p}_{\overline{a},\neq}\right)}{p_{{\rm pass},a}}.
\end{equation}

\subsection{Simplified key rate Eq.~(\ref{K_new_case_2})} \label{denmat_2}
The derivation of Eq.~(\ref{K_new_case_2}) is similar to that for Eq.~(\ref{K_new_case}), but with an additional symmetry applied.
In addition to $\Phi_1-\Phi_3$, we introduce the projector that aggregates statistics over different bases and has the following form: 
\begin{equation}
\begin{gathered}
\Phi_4(\rho)= \frac{1}{2}\left(\rho + R_{\Phi,4} \rho R_{\Phi,4}^+\right), \end{gathered}
\end{equation}
where
 \begin{equation}
\begin{split}
 R_{\Phi,4} = &\left( 
 \ket{z}_{\widetilde{A}}\!\bra{x} \otimes \textrm{H}_A + \ket{x}_{\widetilde{A}}\!\bra{z} \otimes \textrm{H}_A \right)  \\ \otimes &\left(\ket{z}_{\widetilde{B}}\!\bra{x} \otimes \textrm{H}_B + \ket{x}_{\widetilde{B}}\!\bra{z} \otimes \textrm{H}_B \right).
\end{split}
\end{equation}

Then we restrict the set of $\rho_{AB}$ to those that meet the 
equalities: 
\begin{equation}
\begin{gathered}
\left(\textrm{H}\otimes \textrm{H} \rho_{AB} \textrm{H}\otimes \textrm{H} \right) \otimes \ket{zz}_{\widetilde{A}\widetilde{B}}\!\bra{zz} =\rho_{AB} \otimes \ket{xx}_{\widetilde{A}\widetilde{B}}\!\bra{xx}, \\
\left(\textrm{H}\otimes \textrm{H} \rho_{AB} \textrm{H}\otimes \textrm{H} \right) \otimes \ket{xx}_{\widetilde{A}\widetilde{B}}\!\bra{xx} =\rho_{AB} \otimes \ket{zz}_{\widetilde{A}\widetilde{B}}\!\bra{zz}.
\end{gathered}
\end{equation}

Application of the map $\Phi_3$ and $\Phi_4$ to Eqs.~(\ref{rhoAB2}) and (\ref{D_01^a}) corresponds to the replacement of $p_{a,i}$ and $e_a$ by
\begin{equation}
\begin{split}
p_{a,i}&=\frac{1}{8} \sum \limits_{a \in\{x,z\}} \sum \limits_{i = 0}^1 p_a^2\eta_{a,i} = 2p''_{\rm pass}, \\
e_{a}=e_{\overline{a}}&=\frac{1}{4} \sum \limits_{a \in\{x,z\}}p_a^2\eta_a \left( \widetilde{p}_{a,=} -\widetilde{p}_{a,\neq} \right)=2e''.
\end{split}
\end{equation}

Using this new symmetry, we have the following density matrix  from Eq.~(\ref{rhoZX}):
\begin{equation}
p_{\rm pass} \mathcal{X}\left(\widetilde{\rho}
^{\rm sifted} \right)
\rightarrow \sum\limits_{a \in \{x,z\}}\ket{a}_{\widetilde{\textbf{B}}}\!\bra{a} \otimes \rho''_{AB},
\end{equation}
where
\begin{equation}
\rho''_{AB}= \begin{pmatrix}
p''_{\rm pass} & 0 & 0 & e'' \\
0 & p''_{\rm pass} & e'' & 0 \\
0 & e'' & p''_{\rm pass} & 0 \\
e'' & 0 & 0 & p''_{\rm pass} \\
\end{pmatrix}.
\end{equation}

As $\Phi_3$ and $\Phi_4$ are trace-preserving maps, the value $p_{\rm pass}$ is preserved. As a result, the final bound for the key rate has the following form, which is exactly Eq.~(\ref{K_new_case_2}):
\begin{equation} \label{Knewnew3}
K  \geq p_{\rm pass} \left( 1 - h\!\left(\frac{1-\Delta''}{2} \right) \right) -p_{\rm pass}f h(Q),
\end{equation}
\begin{equation} 
\Delta'' = \frac{e''}{p''_{\rm pass}} = \frac{1}{p_{\rm pass}}{\sum\limits_{a \in\{x,z\}} p_a^2\eta_a\left( \widetilde{p}_{\overline{a},=} - \widetilde{p}_{\overline{a},\neq}\right)}.
\end{equation}

\section{\label{mono} Proof of the key rate monotonicity over detection efficiency mismatch parameters}
Let us prove the monotonicity of the key rates with respect to all detection efficiencies $\eta_{a,\alpha}$ given by Eqs.~(\ref{KnewnewH}), (\ref{K_new_case}), and (\ref{K_new_case_2}).
This fact allows us to use the lower bounds for all $\eta_{a,\alpha}$ in practical calculations since the exact values $\eta_{a,\alpha}$ may be not known.

The monotonicity of the key rates can be proved using the monotonicity of the quantum relative entropy in Eq.~(\ref{HD}). 
Indeed, reducing $\eta_{a,\alpha}\to\theta\eta_{a,\alpha}$, where $a\in{z,x}$, $\alpha\in{0,1}$, and $\theta\in[0,1]$, is equivalent to the action of the additional map
\begin{equation}
\label{EqAddDecay}
    \rho_{AB\tilde A\widetilde B}\to
    \sum\limits_{a \in \{x,z\}}
    G_{a,\alpha}
    \rho_{AB\tilde A\widetilde B}
    G_{a,\alpha},
\end{equation}
where
\begin{equation}
\begin{split}
    G_{z,\alpha}&=I- \sum\limits_{\alpha=0}^1 \sqrt{1-\theta}\,
    \ket{\alpha}_{B}\!\bra{\alpha}
    \otimes
    \ket{a}_{\widetilde B}\!\bra a,\\    
    G_{x,\alpha}&=I- \sum\limits_{\alpha=0}^1 \sqrt{1-\theta}\,
    H\ket{\alpha}_{B}\!\bra{\alpha}H
    \otimes
    \ket{a}_{\widetilde B}\!\bra a,    
\end{split}
\end{equation}
on Bob's subsystems $B\widetilde B$ in the arguments of the quantum relative entropy in Eq.~(\ref{HD}). 
This map does not preserve the trace, but the trace of the both arguments is reduced by the same factor. 
Thus, the relative entropy is not increased under the action of this map on both arguments. 
Hence, given the same observable statistics $\Tr\Gamma_i\rho_{AB\widetilde A\widetilde B}$ [see Eqs.~(\ref{tG})], the key rate given by Eq.~(\ref{Knew}) does not increase. As a consequence, the key rates given by Eqs.~(\ref{KnewnewH}), (\ref{K_new_case}) and (\ref{K_new_case_2}) are monotonic with respect to each $\eta_{a,\alpha}$.


\section{Estimation of the key rate by the method of Ref.~\cite{Wang2025}}\label{key_rate_old}

Here we calculate the key rate using Eq.~(B7) from Appendix B of Ref.~\cite{Wang2025}. 
We use the formula for the decoy-state BB84 passive bases choice protocol with memoryless detectors and adapt it for our case of the single photon output and asymptotic regime. 
These changes lead to the removal of finite-size epsilon terms and multi-click terms. 
As a result, the key rate is
\begin{eqnarray} \label{D1}
K  &=&
\max\left(\mathcal{B}_1 \left( 1 - h\left( \mathcal{B}_e\right)\right)-\lambda_{EC}, 0\right),
\\ 
\label{Be}
\mathcal{B}_e &=& \frac{\mathcal{B}_{1,\neq}^{\rm{decoy}}}{a\mathcal{B}_1} + \frac{\delta_{\max}}{a}\left( \frac{\mathcal{B}_{1}^{\rm{decoy}}}{\mathcal{B}_1} + 1\right),
\end{eqnarray}
where notation is taken from Ref.~\cite{Wang2025}. 
Here $\lambda_{EC}$ is the error correction term, 
$\mathcal{B}_1$ is an estimation of the count rate of the received signal single photon states without dark counts,
$\mathcal{B}_e$ is an estimation of the error rate in the signal single-photon pulses, $\mathcal{B}_{1,\neq}^{\rm{decoy}}$, $\mathcal{B}_{1}^{\rm{decoy}}$ are an estimation
of receiving error rates and count rates in single-photon states by decoy state method, 
$\delta_{\rm max}$ is a parameter describing the basis-efficiency mismatch. 
Also, $a$ is a parameter characterizing the beamsplitter:
\begin{equation} 
\begin{gathered}
a = \frac{p_x s}{p_z \left( 1 -s\right)},
\end{gathered}
\end{equation}
where $s= 1/2$ is the beam splitting ratio in the Bob side, $p_x$ and $p_z$ are, as before, the probabilities of choice X and Z bases by Alice.

We will show that the error rate estimation given in Eq.~(\ref{Be}) is large for our practical detection-efficiency mismatch, which leads to the zero key rate. 
Namely, the first term in Eq.~(\ref{Be}) is of the order of the actual single-photon QBER and the second term is problematic.
The factor in the brackets lies in the interval $(1,2)$, but $\delta_{\max}$ it is large for our numerical values.

In Eq.~(\ref{Be}), $\delta_{\max}$ corresponds to the detection-efficiency mismatch between different bases. It is derived in Appendix D of Ref.~\cite{Wang2025} and has the following form:
\begin{equation}
\begin{split}
\delta_{\max}&= (1 + \sqrt{a})\frac{p_x \zeta k^2}{2} \\
&\times\max \left\{ \sqrt{\frac{p_x u_x + p_z u_z}{p_x l_z}},\; \sqrt{\frac{p_x u_x + p_z u_z}{a p_z l_x}} \right\},
\end{split}
\end{equation}
where $\zeta$ is the normalized difference between clicks in different bases, 
$k^2$ corresponds to the normalization of the total number of received clicks, 
and $l_{x}, l_{z}, u_{x}, u_{z}$ correspond to the lower and upper bounds for click probabilities in the bases X and Z. 
These parameters are provided by the following equations: 
\begin{multline*}
 l_{z}= (1 - s)\eta_{(z,\mathrm{min})}(1 - d)^{3} \\ +\left((1 - s)(1 - \eta_{(z,\mathrm{max})}) + s(1 - \eta_{(x,\mathrm{max})})\right)  2d (1 - d)^{3} , 
\end{multline*}
\begin{multline*}
 l_{x}= s\eta_{(x,\mathrm{min})}(1 - d)^{3} \\ + \left((1 - s)(1 - \eta_{(z,\mathrm{max})}) + s(1 - \eta_{(x,\mathrm{max})})\right) 2d (1 - d)^{3},
\end{multline*}
\begin{multline*}
 u_{z}= (1 - s)\eta_{(z,\mathrm{max})}(1 - d)^{3} \\+ \left((1 - s)(1 - \eta_{(z,\mathrm{min})}) + s(1 - \eta_{(x,\mathrm{min})})\right)  2d (1 - d)^{3}, 
 \end{multline*}
\begin{multline*}
 u_{x}= s \eta_{(x,\mathrm{max})}(1 - d)^{3} \\ +\left((1 - s)(1 - \eta_{(z,\mathrm{min})}) + s(1 - \eta_{(x,\mathrm{min})})\right) 2d (1 - d)^{3},
\end{multline*}
where
\begin{equation} 
k^2 = 
\begin{cases}
\left(p_x l_x + p_z l_z\right)^{-1} & \text{if } p_x l_x + p_z l_z > 0 \\
\infty & \text{if } p_x l_x + p_z l_z \leq 0
\end{cases}
\end{equation}

\begin{equation} 
\zeta =\Delta_{1,\max} + \Delta_{2,\max},
\end{equation}

\begin{gather}
 \Delta_{1,\max} = s(1 - d)^3  \left[ 2\eta_{\max} - \eta_{(x,\min)} - \eta_{(z,\min)} \right], 
  \qquad  \\
 \Delta_{2,\max} =
\frac{1}{1 - s}(1 - \eta_{\min}) 2d (1 - d)^3,
\end{gather}
where $d = Y_0/4 = 7.5\times10^{-7}$ is the dark count rate, 
$\eta_{(b,{\mathrm{max}})}, \eta_{(b,{\mathrm{min}})}$ are the maximum and minimum efficiencies in the corresponding bases $b \in\{x,z\}$, 
and $\eta_{\max}, \eta_{\min}$ are the maximum and minimum efficiencies among all.

Sum up all above equations together, we can derive the probability of errors in single-photon states for our parameters from Table~\ref{table:param} and Table~\ref{table:station}. The upper bound for the probability of errors $\mathcal{B}_e$ in Eq.~(\ref{D1}) is around 1.5, i.e., it is trivial, so the final key rate is zero.

\bibliography{apssamp}

@article{Bennett2014,
author = {Charles H. Bennett and Gilles Brassard},
title = {Quantum cryptography: Public key distribution and coin tossing},
journal = {Theoretical Computer Science},
volume = {560},
pages = {7--11},
year = {2014},
issn = {0304-3975},
doi = {10.1016/j.tcs.2014.05.025},
url = {https://www.sciencedirect.com/science/article/pii/S0304397514004241},
note={(First publication: 1984)}
}

@article{Mayers2001,
author = {Mayers, Dominic},
title = {Unconditional security in quantum cryptography},
year = {2001},
issue_date = {May 2001},
publisher = {Association for Computing Machinery},
address = {New York, NY, USA},
volume = {48},
number = {3},
issn = {0004-5411},
url = {https://doi.org/10.1145/382780.382781},
doi = {10.1145/382780.382781},
journal = {Journal of the ACM},
month = may,
pages = {351--406},
numpages = {56}
}

@article{Shor2000,
  title = {Simple Proof of Security of the {BB}84 Quantum Key Distribution Protocol},
  author = {Shor, Peter W. and Preskill, John},
  journal = {Physical Review Letters},
  volume = {85},
  issue = {2},
  pages = {441--444},
  numpages = {0},
  year = {2000},
  month = {Jul},
  publisher = {American Physical Society},
  doi = {10.1103/PhysRevLett.85.441},
  url = {https://link.aps.org/doi/10.1103/PhysRevLett.85.441}
}

@article{Renner2008,
author = {Renner, Renato},
title = {SECURITY OF QUANTUM KEY DISTRIBUTION},
journal = {International Journal of Quantum Information},
volume = {6},
number = {01},
pages = {1--127},
year = {2008},
doi = {10.1142/S0219749908003256},
URL = {https://doi.org/10.1142/S0219749908003256},
}

@article{Koashi2009,
doi = {10.1088/1367-2630/11/4/045018},
url = {https://dx.doi.org/10.1088/1367-2630/11/4/045018},
year = {2009},
month = {apr},
volume = {11},
number = {4},
pages = {045018},
author = {Koashi, M},
title = {Simple security proof of quantum key distribution based on complementarity},
journal = {New Journal of Physics}
}

@article{Tomamichel2017,
  doi = {10.22331/q-2017-07-14-14},
  url = {https://doi.org/10.22331/q-2017-07-14-14},
  title = {A largely self-contained and complete security proof for quantum key  distribution},
  author = {Tomamichel, Marco and Leverrier, Anthony},
  journal = {Quantum},
  issn = {2521-327X},
  volume = {1},
  pages = {14},
  year = {2017}
}

@article{Gisin2002,
  title = {Quantum cryptography},
  author = {Gisin, Nicolas and Ribordy, Gr\'egoire and Tittel, Wolfgang and Zbinden, Hugo},
  journal = {Review of Modern Physics},
  volume = {74},
  issue = {1},
  pages = {145--195},
  numpages = {0},
  year = {2002},
  month = {Mar},
  publisher = {American Physical Society},
  doi = {10.1103/RevModPhys.74.145},
  url = {https://link.aps.org/doi/10.1103/RevModPhys.74.145}
}

@article{Xu2020,
  title = {Secure quantum key distribution with realistic devices},
  author = {Xu, Feihu and Ma, Xiongfeng and Zhang, Qiang and Lo, Hoi-Kwong and Pan, Jian-Wei},
  journal = {Review of Modern Physics},
  volume = {92},
  issue = {2},
  pages = {025002},
  numpages = {60},
  year = {2020},
  month = {May},
  publisher = {American Physical Society},
  doi = {10.1103/RevModPhys.92.025002},
  url = {https://link.aps.org/doi/10.1103/RevModPhys.92.025002}
}

@article{Diamanti2016,
  title = {Practical challenges in quantum key distribution},
  author = {Diamanti, Eleni and Lo, Hoi-Kwong and Qi, Bing and Yuan, Zhiliang},
  journal = {npj Quantum Information},
  volume = {2},
  issue = {1},
  pages = {16025},
  numpages = {60},
  year = {2016},
  month = {November},
  doi = {10.1038/npjqi.2016.25}
}

@article{Reutov2023,
 author = {Reutov, Aleksei and Tayduganov, Andrey and Mayboroda, Vladimir and Fat’yanov, Oleg},
 title = {Security of the Decoy-State {BB}84 Protocol with Imperfect State Preparation},
 journal = {Entropy},
 volume = {25},
 year = {2023},
 pages = {1556},
 number = {11},
 doi = {10.3390/e25111556}
 }

@article{Makarov2006,
  title = {Effects of detector efficiency mismatch on security of quantum cryptosystems},
  author = {Makarov, Vadim and Anisimov, Andrey and Skaar, Johannes},
  journal = {Physical Review A},
  volume = {74},
  issue = {2},
  pages = {022313},
  numpages = {11},
  year = {2006},
  month = {Aug},
  publisher = {American Physical Society},
  doi = {10.1103/PhysRevA.74.022313},
  url = {https://link.aps.org/doi/10.1103/PhysRevA.74.022313}
}

@article{Jain2016,
author = {Nitin Jain and Birgit Stiller and Imran Khan and Dominique Elser and Christoph Marquardt and Gerd Leuchs},
title = {Attacks on practical quantum key distribution systems (and how to prevent them)},
journal = {Contemporary Physics},
volume = {57},
number = {3},
pages = {366--387},
year = {2016},
publisher = {Taylor \& Francis},
doi = {10.1080/00107514.2016.1148333},
}

@article{Marcomini2025,
doi = {10.1088/2058-9565/adc8cc},
url = {https://dx.doi.org/10.1088/2058-9565/adc8cc},
year = {2025},
month = {apr},
publisher = {IOP Publishing},
volume = {10},
number = {3},
pages = {035002},
author = {Marcomini, Alessandro and Mizutani, Akihiro and Grünenfelder, Fadri and Curty, Marcos and Tamaki, Kiyoshi},
title = {Loss-tolerant quantum key distribution with detection efficiency mismatch},
journal = {Quantum Science and Technology}
}

@article{Lo2012,
  title = {Measurement-Device-Independent Quantum Key Distribution},
  author = {Lo, Hoi-Kwong and Curty, Marcos and Qi, Bing},
  journal = {Physical Review Letters},
  volume = {108},
  issue = {13},
  pages = {130503},
  numpages = {5},
  year = {2012},
  month = {Mar},
  publisher = {American Physical Society},
  doi = {10.1103/PhysRevLett.108.130503},
  url = {https://link.aps.org/doi/10.1103/PhysRevLett.108.130503}
}

@article{Tamaki2012,
  title = {Phase encoding schemes for measurement-device-independent quantum key distribution with basis-dependent flaw},
  author = {Tamaki, Kiyoshi and Lo, Hoi-Kwong and Fung, Chi-Hang Fred and Qi, Bing},
  journal = {Physical Review A},
  volume = {85},
  issue = {4},
  pages = {042307},
  numpages = {14},
  year = {2012},
  month = {Apr},
  publisher = {American Physical Society},
  doi = {10.1103/PhysRevA.85.042307},
  url = {https://link.aps.org/doi/10.1103/PhysRevA.85.042307}
}

@article{Bochkov2019,
  title = {Security of quantum key distribution with detection-efficiency mismatch in the single-photon case: Tight bounds},
  author = {Bochkov, M. K. and Trushechkin, A. S.},
  journal = {Physical Review A},
  volume = {99},
  issue = {3},
  pages = {032308},
  numpages = {15},
  year = {2019},
  month = {Mar},
  publisher = {American Physical Society},
  doi = {10.1103/PhysRevA.99.032308},
  url = {https://link.aps.org/doi/10.1103/PhysRevA.99.032308}
}

@article{Trushechkin2022,
  doi = {10.22331/q-2022-07-22-771},
  url = {https://doi.org/10.22331/q-2022-07-22-771},
  title = {Security of quantum key distribution with detection-efficiency mismatch in the multiphoton case},
  author = {Trushechkin, Anton},
  journal = {Quantum},
  issn = {2521-327X},
  volume = {6},
  pages = {771},
  year = {2022}
}

@article{Tupkary2025,
  doi = {10.22331/q-2025-12-11-1937},
  url = {https://doi.org/10.22331/q-2025-12-11-1937},
  title = {Phase error rate estimation in {QKD} with imperfect detectors},
  author = {Tupkary, Devashish and Nahar, Shlok and Sinha, Pulkit and L{\"{u}}tkenhaus, Norbert},
  journal = {{Quantum}},
  issn = {2521-327X},
  volume = {9},
  pages = {1937},
  year = {2025}
}

@article{Grasselli2025,
  title = {Quantum key distribution with basis-dependent detection probability},
  author = {Grasselli, Federico and Chesi, Giovanni and Walk, Nathan and Kampermann, Hermann and Widomski, Adam and Ogrodnik, Maciej and Karpi\ifmmode \acute{n}\else \'{n}\fi{}ski, Micha\l{} and Macchiavello, Chiara and Bru\ss{}, Dagmar and Wyderka, Nikolai},
  journal = {Physical Review Applied},
  volume = {23},
  issue = {4},
  pages = {044011},
  numpages = {85},
  year = {2025},
  month = {Apr},
  publisher = {American Physical Society},
  doi = {10.1103/PhysRevApplied.23.044011},
  url = {https://link.aps.org/doi/10.1103/PhysRevApplied.23.044011}
}

@article{Winick2018,
  doi = {10.22331/q-2018-07-26-77},
  url = {https://doi.org/10.22331/q-2018-07-26-77},
  title = {Reliable numerical key rates for quantum key distribution},
  author = {Winick, Adam and L{\"{u}}tkenhaus, Norbert and Coles, Patrick J.},
  journal = {{Quantum}},
  issn = {2521-327X},
  volume = {2},
  pages = {77},
  year = {2018}
}

@article{Coles2016,
  doi = {10.1038/ncomms11712},
  url = {https://doi.org/10.1038/ncomms11712},
  title = {Numerical approach for unstructured quantum key distribution},
  author = {Coles, Patrick J. and Metodiev, Eric M. and L{\"{u}}tkenhaus, Norbert},
  journal = {Nature Communications},
  issn = {1},
  volume = {7},
  pages = {11712},
  month = may,
  year = {2016}
}

@article{Zhang2021,
  title = {Security proof of practical quantum key distribution with detection-efficiency mismatch},
  author = {Zhang, Yanbao and Coles, Patrick J. and Winick, Adam and Lin, Jie and L\"utkenhaus, Norbert},
  journal = {Physical Review Research},
  volume = {3},
  issue = {1},
  pages = {013076},
  numpages = {13},
  year = {2021},
  month = {Jan},
  publisher = {American Physical Society},
  doi = {10.1103/PhysRevResearch.3.013076},
  url = {https://link.aps.org/doi/10.1103/PhysRevResearch.3.013076}
}

@article{Nahar2026,
  doi = {10.22331/q-2026-03-24-2044},
  url = {https://doi.org/10.22331/q-2026-03-24-2044},
  title = {Imperfect detectors for adversarial tasks with applications to quantum key distribution},
  author = {Nahar, Shlok and Tupkary, Devashish and L{\"{u}}tkenhaus, Norbert},
  journal = {{Quantum}},
  issn = {2521-327X},
  volume = {10},
  pages = {2044},
  year = {2026}
}

@article{Liao2017,
	author = {Liao, Sheng-Kai and Cai, Wen-Qi and Liu, Wei-Yue and Zhang, Liang and Li, Yang and Ren, Ji-Gang and Yin, Juan and Shen, Qi and Cao, Yuan and Li, Zheng-Ping and Li, Feng-Zhi and Chen, Xia-Wei and Sun, Li-Hua and Jia, Jian-Jun and Wu, Jin-Cai and Jiang, Xiao-Jun and Wang, Jian-Feng and Huang, Yong-Mei and Wang, Qiang and Zhou, Yi-Lin and Deng, Lei and Xi, Tao and Ma, Lu and Hu, Tai and Zhang, Qiang and Chen, Yu-Ao and Liu, Nai-Le and Wang, Xiang-Bin and Zhu, Zhen-Cai and Lu, Chao-Yang and Shu, Rong and Peng, Cheng-Zhi and Wang, Jian-Yu and Pan, Jian-Wei},
	date = {2017/09/01},
	doi = {10.1038/nature23655},
	id = {Liao2017},
	isbn = {1476-4687},
	journal = {Nature},
	number = {7670},
	pages = {43--47},
	title = {Satellite-to-ground quantum key distribution},
	url = {https://doi.org/10.1038/nature23655},
	volume = {549},
	year = {2017}
    }

@article{Li2025,
  author    = {Li, Yang and Cai, Wen-Qi dan Yao, Hai-Xun and Yu, Xu-Dong and Ju, Yang and Qi, Ming and Lin, Jin and Huang, Xuan and Wang, Jin-Cai and Chen, Hai-Ying and Hu, Yan-Lin and Xu, Pin and Xie, Jun and Lin, Jin-Shi and Liang, Run-Sheng and Zhang, Kai and Liang, Hao and Liu, Jia-Rui and Li, Yu-Zhe and Zhou, Fei and Guo, Yu and Zou, Shi-Ming and Chen, Wei and Shen, Qi and Zhang, Qi and Peng, Cheng-Zhi and Wang, Xiang-Bin and Liao, Sheng-Kai and Pan, Jian-Wei},
  title     = {Microsatellite-based real-time quantum key distribution},
  journal   = {Nature},
  volume    = {640},
  number    = {8057},
  pages     = {47--54},
  year      = {2025},
  doi       = {10.1038/s41586-025-08739-z},
  url       = {https://doi.org}
}

@misc{Wang2025,
author = {Wang, Zhiyao and Tupkary, Devashish and Nahar, Shlok},
title = {Phase error estimation for passive detection setups with imperfections and memory effects},
year = {2025},
archivePrefix={arXiv},
eprint={2508.21486},
primaryClass={quant-ph},
}

@article{Khmelev2024,
author = {Aleksandr Khmelev and Alexey Duplinsky and Ruslan Bakhshaliev and Egor Ivchenko and Liubov Pismeniuk and Vladimir Mayboroda and Ivan Nesterov and Arkadiy Chernov and Anton Trushechkin and Evgeniy Kiktenko and Vladimir Kurochkin and Aleksey Fedorov},
journal = {Opt. Express},
number = {7},
pages = {11964--11978},
publisher = {Optica Publishing Group},
title = {Eurasian-scale experimental satellite-based quantum key distribution with detector efficiency mismatch analysis},
volume = {32},
month = {Mar},
year = {2024},
url = {https://opg.optica.org/oe/abstract.cfm?URI=oe-32-7-11964},
doi = {10.1364/OE.511772}
}

@article{Ivchenko2025,
  title = {Secrecy of quantum key distribution in case passive basis choice with detection efficiency mismatch},
  author = {E.I. Ivchenko and A.S. Trushechkin and A.V. Khmelev and V.L. Kurochkin},
  journal = {J. Opt. Technol.},
  publisher = {Optica Publishing Group},
  volume = {92},
  issue = {7},
  pages = {468--472},
  year = {2025},
  doi = {10.1364/JOT.92.000468},
}

@article{Zhang2017_DEM,
  title = {Entanglement verification with detection-efficiency mismatch},
  author = {Zhang, Yanbao and L\"utkenhaus, Norbert},
  journal = {Physical Review A},
  volume = {95},
  issue = {4},
  pages = {042319},
  numpages = {20},
  year = {2017},
  month = {Apr},
  publisher = {American Physical Society},
  doi = {10.1103/PhysRevA.95.042319},
  url = {https://link.aps.org/doi/10.1103/PhysRevA.95.042319}
}

@article{Devetak2005,
  title = {Distillation of secret key and entanglement from quantum states},
  author = {Devetak, Igor and Winter, Andreas},
  journal = {Proceedings of the Royal Society A},
  volume = {461},
  issue = {2053},
  pages = {207},
  year = {2005},
  month = {Jan},
  publisher = {Royal Society},
  doi = {10.1098/rspa.2004.1372},
  url = {https://doi.org/10.1098/rspa.2004.1372}
}

@article{Berta2010,
  title = {The uncertainty principle in the presence of quantum memory},
  author = {Berta, Mario and Christandl, Matthias and Colbeck, Roger and Renes, Joseph M. and Renner, Renato},
  journal = {Nature Physics},
  volume = {6},
  issue = {9},
  pages = {659},
  year = {2010},
  month = {Sep},
  doi = {10.1038/nphys1734},
  url = {https://doi.org/10.1038/nphys1734}
}

@article{Coles2011,
  title = {Information-theoretic treatment of tripartite systems and quantum channels},
  author = {Coles, Patrick J. and Yu, Li and Gheorghiu, Vlad and Griffiths, Robert B.},
  journal = {Physical Review A},
  volume = {83},
  issue = {6},
  pages = {062338},
  numpages = {19},
  year = {2011},
  month = {Jun},
  publisher = {American Physical Society},
  doi = {10.1103/PhysRevA.83.062338},
  url = {https://link.aps.org/doi/10.1103/PhysRevA.83.062338}
}

@article{Dusek2000,
  title = {Unambiguous state discrimination in quantum cryptography with weak coherent states},
  author = {Du\ifmmode \check{s}\else \v{s}\fi{}ek, Miloslav and Jahma, Mika and L\"utkenhaus, Norbert},
  journal = {Physical Review A},
  volume = {62},
  issue = {2},
  pages = {022306},
  numpages = {9},
  year = {2000},
  month = {Jul},
  publisher = {American Physical Society},
  doi = {10.1103/PhysRevA.62.022306},
  url = {https://link.aps.org/doi/10.1103/PhysRevA.62.022306}
}

@article{Lutkenhaus_2002,
doi = {10.1088/1367-2630/4/1/344},
url = {https://dx.doi.org/10.1088/1367-2630/4/1/344},
year = {2002},
month = {jul},
publisher = {},
volume = {4},
number = {1},
pages = {44},
author = {Lütkenhaus, Norbert and Jahma, Mika},
title = {Quantum key distribution with realistic states:
photon-number statistics  in the photon-number splitting attack},
journal = {New Journal of Physics}
}

@article{Wang2005,
  title = {Beating the Photon-Number-Splitting Attack in Practical Quantum Cryptography},
  author = {Wang, Xiang-Bin},
  journal = {Physical Review Letters},
  volume = {94},
  issue = {23},
  pages = {230503},
  numpages = {4},
  year = {2005},
  month = {Jun},
  publisher = {American Physical Society},
  doi = {10.1103/PhysRevLett.94.230503},
  url = {https://link.aps.org/doi/10.1103/PhysRevLett.94.230503}
}

@article{Lo2005,
  title = {Decoy State Quantum Key Distribution},
  author = {Lo, Hoi-Kwong and Ma, Xiongfeng and Chen, Kai},
  journal = {Physical Review Letters},
  volume = {94},
  issue = {23},
  pages = {230504},
  numpages = {4},
  year = {2005},
  month = {Jun},
  publisher = {American Physical Society},
  doi = {10.1103/PhysRevLett.94.230504},
  url = {https://link.aps.org/doi/10.1103/PhysRevLett.94.230504}
}

@article{Ma2005,
  title = {Practical decoy state for quantum key distribution},
  author = {Ma, Xiongfeng and Qi, Bing and Zhao, Yi and Lo, Hoi-Kwong},
  journal = {Physical Review A},
  volume = {72},
  issue = {1},
  pages = {012326},
  numpages = {15},
  year = {2005},
  month = {Jul},
  publisher = {American Physical Society},
  doi = {10.1103/PhysRevA.72.012326},
  url = {https://link.aps.org/doi/10.1103/PhysRevA.72.012326}
}

@article{Zhang2017_decoy,
  title = {Improved key-rate bounds for practical decoy-state quantum-key-distribution systems},
  author = {Zhang, Zhen and Zhao, Qi and Razavi, Mohsen and Ma, Xiongfeng},
  journal = {Physical Review A},
  volume = {95},
  issue = {1},
  pages = {012333},
  numpages = {14},
  year = {2017},
  month = {Jan},
  publisher = {American Physical Society},
  doi = {10.1103/PhysRevA.95.012333},
  url = {https://link.aps.org/doi/10.1103/PhysRevA.95.012333}
}

@article{Khmelev2023_model,
AUTHOR = {Khmelev, Aleksandr V. and Ivchenko, Egor I. and Miller, Alexander V. and Duplinsky, Alexey V. and Kurochkin, Vladimir L. and Kurochkin, Yury V.},
TITLE = {Semi-Empirical Satellite-to-Ground Quantum Key Distribution Model for Realistic Receivers},
JOURNAL = {Entropy},
VOLUME = {25},
YEAR = {2023},
NUMBER = {4},
pages = {670},
URL = {https://www.mdpi.com/1099-4300/25/4/670},
PubMedID = {37190458},
ISSN = {1099-4300},
doi = {10.3390/e25040670}
}

@article{Zhao2008,
  title = {Quantum hacking: Experimental demonstration of time-shift attack against practical quantum-key-distribution systems},
  author = {Zhao, Yi and Fung, Chi-Hang Fred and Qi, Bing and Chen, Christine and Lo, Hoi-Kwong},
  journal = {Physical Review A},
  volume = {78},
  issue = {4},
  pages = {042333},
  numpages = {5},
  year = {2008},
  month = {Oct},
  publisher = {American Physical Society},
  doi = {10.1103/PhysRevA.78.042333},
  url = {https://link.aps.org/doi/10.1103/PhysRevA.78.042333}
}

@article{Sajeed2015,
  title = {Security loophole in free-space quantum key distribution due to spatial-mode detector-efficiency mismatch},
  author = {Sajeed, Shihan and Chaiwongkhot, Poompong and Bourgoin, Jean-Philippe and Jennewein, Thomas and L\"utkenhaus, Norbert and Makarov, Vadim},
  journal = {Physical Review A},
  volume = {91},
  issue = {6},
  pages = {062301},
  numpages = {6},
  year = {2015},
  month = {Jun},
  publisher = {American Physical Society},
  doi = {10.1103/PhysRevA.91.062301},
  url = {https://link.aps.org/doi/10.1103/PhysRevA.91.062301}
}

@article{Pirandola2020,
author = {S. Pirandola and U. L. Andersen and L. Banchi and M. Berta and D. Bunandar and R. Colbeck and D. Englund and T. Gehring and C. Lupo and C. Ottaviani and J. L. Pereira and M. Razavi and J. Shamsul Shaari and M. Tomamichel and V. C. Usenko and G. Vallone and P. Villoresi and P. Wallden},
journal = {Adv. Opt. Photon.},
number = {4},
pages = {1012--1236},
publisher = {Optica Publishing Group},
title = {Advances in quantum cryptography},
volume = {12},
month = {Dec},
year = {2020},
url = {https://opg.optica.org/aop/abstract.cfm?URI=aop-12-4-1012},
doi = {10.1364/AOP.361502}
}

@article{Muller_Hermes2017,
  title = {Monotonicity of the Quantum Relative Entropy Under Positive Maps},
  author = {Müller-Hermes, Alexander and Reeb, David},
  journal = {Annales Henri Poincaré},
  volume = {18},
  issue = {5},
  pages = {1777},
  year = {2017},
  month = {May},
  publisher = {Springer},
  doi = {10.1007/s00023-017-0550-9},
  url = {https://doi.org/10.1007/s00023-017-0550-9}
}

\end{document}